\definecolor{CiteGreen}{RGB}{0,160,0} 
\newcommand{\say}{\emph} 
\newcommand{\E}{{\mathbb E}}
\newcommand{\N}{{\mathbb N}}
\newcommand{\R}{{\mathbb R}}
\newcommand{\Pa}{{\mathbb P}}
\newcommand{\Q}{{\mathbb Q}}
\newcommand{\Fcal}{{\mathcal F}}
\newcommand{\Scal}{{\mathcal S}}
\newcommand{\e}{\ensuremath{\mathrm{e\;\!}}} 
\newcommand{\im}{\ensuremath{\mathrm{i}}} 
\DeclareMathOperator{\spn}{span}
\DeclareMathOperator{\vect}{vec}
\newcommand{\readdata}[1]{\DTLfetch{mydata1}{thekey}{#1}{thevalue}}
\newtheorem{theorem}{Theorem}
\newtheorem*{theorem*}{Theorem}
\newtheorem{lemma}{Lemma}
\newtheorem{remark}{Remark}
\newtheorem{example}{Example}
\begin{document}
\DTLloaddb[noheader, keys={thekey,thevalue}]{mydata1}{data1.dat}

\title{A machine learning approach to portfolio pricing and risk management for high-dimensional problems\footnote{We thank participants at the Online Workshop on Stochastic Analysis and Hermite Sobolev Spaces, SFI Research Days, SIAM Conference on Financial Mathematics and Engineering, Virtual Risk Management and Insurance Seminar at Georgia State University, SIAM Activity Group on Financial Mathematics and Engineering virtual seminar, Ben Feng, Antoon Pelsser, and two anonymous referees for their comments.}}
\author{Lucio Fernandez-Arjona\footnote{University of Zurich. Email: lucio.fernandez.arjona@business.uzh.ch} \and
Damir Filipovi\'c\footnote{EPFL and Swiss Finance Institute. Email: damir.filipovic@epfl.ch} }
\date{21 March 2022}

\maketitle

\begin{abstract}
We present a general framework for portfolio risk management in discrete time, based on a replicating martingale. This martingale is learned from a finite sample in a supervised setting. Our method learns the features necessary for an effective low-dimensional representation, overcoming the curse of dimensionality common to function approximation in high-dimensional spaces, and applies for a wide range of model distributions. We show numerical results based on polynomial and neural network bases applied to high-dimensional Gaussian models. In these examples, both bases offer superior results to naive Monte Carlo methods and regress-now least-squares Monte Carlo.\\

\noindent \textbf{keywords:} Solvency capital; dimensionality reduction; neural networks; nested Monte Carlo; replicating portfolios.\newline
\end{abstract}

\section{Introduction}

Financial institutions face a variety of risks on their portfolios. Whether they be market and credit risk for investment portfolios, default and prepayment risk on their mortgage portfolios, or longevity risk on life insurance portfolios, the balance sheet of a bank or insurance company is exposed to many risk factors. Failure to manage these risks can lead to insolvency---with associated losses to shareholders, bondholders and/or customers---or overly conservative business strategies, which hurt consumers.

Alongside qualitative assessments---and plenty of common sense---portfolio risk management requires quantitative models that are accurate and sufficiently fast to provide useful information to management. Additionally, government regulations, such as solvency regimes, require extensive calculations to produce the required reports. Simulation techniques are often used to explore possible future outcomes. Since quantitative models require to estimate conditional expectations across a time interval, Monte Carlo simulations can be used to calculate those expectations. However, plain Monte Carlo methods suffer from problems with both accuracy and speed. 

Alternative methods have been developed over the years, some of them using functional approximation techniques. Under this approach, an approximation to the full, slow model is built using one or more faster functions. For example, the behaviour of a portfolio can be replicated via an appropriate combination of basis functions, which are faster to calculate than the original model. Most---if not all---of these alternatives suffer from several problems. Some of them are not data-driven---requiring subject matter expertise---which limits their applicability to complex problems and the ability to automate them. Others can be automated but have low quality of the approximation. And some other again are limited to low-dimensional problems.

In this paper we present a method that overcomes or greatly diminishes these problems. Our method for calculating conditional expectations uses a machine learning approach to learn a suitable function from finite samples. Therefore, the entire process is data-driven and can be free of manual steps. We show how this function can be used to obtain accurate estimates of price and risk measures, focusing on practical real-world situations in terms of runtime and number of samples being used.

The learned functions are linear combinations of functional bases, of which we present several examples, including polynomials and neural networks (of the single-layer feed-forward type). In all cases, the conditional expectations are calculated in closed-form---even for neural networks---which contributes to the accuracy and speed of the solution. We aim at high-dimensional cases, where working with a full polynomial basis is unfeasible due to the combinatorial explosion of the number of basis functions. This motivates the use of a special polynomial basis. In this basis, the input vector undergoes a data-driven, linear dimensionality reduction step---similar to a linear neural network layer---while remaining tractable with closed-form solutions, as we show. While our numerical examples are based on Gaussian measures, the method applies to a wide range of model distributions. 

In probabilistic terms, we obtain martingales that replicate the value processes---given as risk-neutral conditional expectations---of financial or insurance products. Drawing a parallel with the concept of a replicating portfolio, we also call our approach the \say{replicating martingale} method. Replicating portfolios is a widely used method in the financial industry, which relies on building linear combinations of derivatives to approximate conditional expectations necessary for market risk calculations. Our proposed replicating martingale method is also based on linear combinations of basis functions, but these are not restricted to market risk calculations. Any risk exposure can be modelled with replicating martingales because its basis functions---polynomials or neural networks---are agnostic to the underlying risk type.

Given its regression-based nature and use of simulated samples, replicating martingales is a method of the least squares Monte Carlo (LSMC) family. Within the LSMC family of methods, replicating martingales are a regress-later method (\cite{glasserman2002simulation}) given the regression is made against terminal payoffs and not against empirical conditional expectations.

We implement two numerical examples of high-dimensional Gaussian models---a plain European call option and a path-dependent life insurance product---to perform extensive testing of the accuracy of the risk calculations based on replicating martingales. In line with existing machine learning literature, we also include extensive comparisons with alternative methods, such as nested Monte Carlo and other LSMC methods, and we find our method offers superior results. Also in line with machine learning best practices, we publish the datasets for our examples, \cite{dataset}.
We hope that these datasets can be used by others to allow more direct comparisons between methods in future research.

\subsection{Related literature}

An early example of static replication via basis functions can be found in \cite{madan1994contingent}, which presents a framework for replication of general contingent claims. These contingent claims are modeled in a Hilbert space and the static replication problem is solved by constructing a countable orthonormal basis. The method is applied to the pricing and hedging of these contingent claims. \cite{carriere1996valuation} and \cite{LongstaffSchwartz} use a sequential approximation algorithm to calculate the conditional expectations required in the valuation of options with a early exercise (like American options). The method estimates these conditional expectations from the cross-sectional information in the simulation by using least squares, which gives the method the name of LSMC. \cite{andreatta2003valuing} apply this idea to valuation of life insurance policies.

The first distinction between regress-now and regress-later LSMC appears in \cite{glasserman2002simulation}. The former is the direct estimation of the conditional expectation function, and the latter the indirect estimation via regression against the terminal payoff of the contingent claim. Working in the context of American option pricing via approximate dynamic programming, they find that regress-later LSMC give less-dispersed estimates than regress-now LSMC.

The distinction and relationship between regress-now and regress-later models are important to understand our method. Regress-later models produce---ceteris paribus---better approximation functions, but introduce a few difficulties, among them the need to solve a much larger regression problem and the need to evaluate the conditional expectation of the approximation function. At the core of our paper is the demonstration of how---for polynomials and neural networks as bases---one can overcome the problem of higher dimensionality using linear dimensionality reduction, and one can calculate conditional expectations in closed form. These two factors compensate the difficulties introduced by the regress-later approach, and allow to achieve better results than a comparable regress-now approach would.

In the area of polynomial regress-now LSMC, \cite{broadie2015risk} show that such regression-based methods can---asymptotically---improve the convergence rate of nested Monte Carlo methods. They provide quality comparisons against nested Monte Carlo, and a delta-gamma approach but not against regress-later methods, whereas we do.

A regress-later model based on orthonormal piecewise linear basis functions is presented in \cite{pelsser2016difference}. Path-dependency and the resulting high-dimensional problem in long-term projections is managed via a hand-picked dimensionality-reduction function to avoid the curse of dimensionality. In that framework, the basis functions are not guaranteed to have a closed form solution, whose existence depends on the choice of dimensionality-reduction function. This function must be given to the method, and is based on expert judgement and knowledge of the problem domain. Moreover, the choice of this function implies a trade-off between complexity and dimensionality---for a given target accuracy. High-dimensional functions lead to the curse of dimensionality while low-dimensional functions might be too complex to find a closed-form solution to their conditional expectations. By contrast, our framework uses a data-driven dimensionality-reduction function in the parameter space instead of an arbitrary function.  In comparison to the piecewise linear model, which requires fixing a grid, neural networks are able to provide a data-driven grid for its activation functions.

Another application of piecewise polynomials can be found in \cite{duong2019application}. In this case, the method is based on splines in a regress-now setting. In contrast, our models are based on global polynomials in a regress-later setting.

A neural network model is applied to solvency capital problem in life insurance in \cite{castellani2018investigation}. The neural network model shows better performance than LSMC---regress-now model with polynomial basis functions---and a support vector regression model. All three models---including the neural network model---are regress-now models. By contrast, we focus on regress-later methods, which show better accuracy in the examples in addition to being better in theory.

Another approach for the calculation of risk metrics using a functional approximation in presented in \cite{bauer2020enterpriserisk}, which presents a LSMC regress-now method with a data-driven selection of basis functions, with an example given for a Gaussian model and Hermite polynomials.

The literature also contains other examples of methods not based on polynomials or neural networks, for example \cite{hong2017kernel} and \cite{risk2018sequential}. Those papers show methods to approach the same problem as the replicating martingales in this paper, namely the calculation of risk metrics for risk management purposes, but using kernel methods and Gaussian process regression, respectively.

Table \ref{tab:paper_comparison} summarizes the above discussion and gives an overview of the related literature on regress-now and regress-later methods.

\begin{table}[ht]
\small
\caption{Comparison of regress-now and regress-later methods in the literature. LDR stands for linear dimensionality reduction.}
\label{tab:paper_comparison}
\begin{tabular}{p{0.15\linewidth}p{0.25\linewidth}p{0.5\linewidth}}
\toprule
    & \textbf{Dimensionality } & \textbf{Regression method}         \\
    & \textbf{reduction on } &         \\
    & \textbf{inputs} &         \\
\midrule
\textbf{Regress-now}               &                                   &    \\[2ex]

\cite{broadie2015risk}             & none                              & incomplete, manually-selected monomial basis                                               \\[2ex]
\cite{duong2019application}        & none                              & polynomial spline space                                                                     \\[2ex]
\cite{castellani2018investigation} & implicit in neural network        & neural network                                                                              \\[2ex]
\cite{bauer2020enterpriserisk}     & none                              & left singular functions of the conditional expectation operator               \\[2ex]
\cite{hong2017kernel}              & model decomposition              & kernel smoothing                                                                            \\[2ex]
\cite{risk2018sequential}          & none                             & Gaussian process regression                                                                 \\
\midrule
\textbf{Regress-later}             &                                   &                                                                                             \\[2ex]
\cite{pelsser2016difference}       & manually selected function    & orthonormal piecewise linear basis and Hermite polynomials \\[2ex]
this paper            & LDR   & orthogonal polynomials and shallow neural networks                                                       
\end{tabular}

\end{table}

Another strand of related literature is in the field of uncertainty quantification, where polynomial surrogate functions have been used for a long time to reduce the runtime of complex models. Recently, \cite{hokanson2018data} showed that polynomial ridge approximation can be extended to reduce dimensionality in a data-driven manner. We follow a similar approach and apply it to portfolio pricing and risk management.

Nested Monte Carlo methods for portfolio management have been studied in \cite{lee2003computing} and \cite{gordy2010nested} among others. \cite{gordy2010nested} show several methods that can reduce the computational cost of nested Monte Carlo for a homogeneous portfolio of financial instruments with an additive structure. Since we aim to cover portfolios do not exhibit such additive structure---such as the life insurance portfolio---we restrict our comparisons to standard nested Monte Carlo.

An unfortunate aspect of this literature is that there are no standard models on which to compare the quality of the surrogate models, as is the case in the field of machine learning with datasets like MNIST \cite{lecun1998gradient}. Such a dataset would allow more straightforward comparisons among the advanced methods mentioned above. As in the large majority of the above literature, we compare the replicating martingale approach to some well established methods---in our case nested Monte Carlo and regress-now LSMC---but not to any of the other advanced methods proposed in the literature. In this sense, we do not claim any general superiority of our approach.

The remainder of the paper is as follows. Section~\ref{sec_RMP} formalizes the replicating martingale problem and recalls the standard nested Monte Carlo approach, which is then illustrated by means of a preliminary example in Section~\ref{sec:call_nmc_new}. Section~\ref{sec:ml_approach} describes our machine learning approach to the replicating martingale problem. It contains novel, rigorous results on the existence and uniqueness of the optimal surrogate function. Sections~\ref{sec:call_example} and \ref{sec:insurance} provide numerical case studies: the initial European call option example revisited in Section~\ref{sec:call_example}, and an insurance liability model in Section~\ref{sec:insurance}. Section~\ref{sec_conc} concludes. The appendix contains proofs and technical background material. Appendix~\ref{sec:qual_metrics} describes the quality metrics used to compare different methods. Appendix~\ref{ESG app} contains the economic scenario generator underlying the numerical examples in the main text. Appendix~\ref{sec:proof_thmVk} contains all proofs and some auxiliary results of independent interest. A comparison of the runtimes of the methods is given in Appendix~\ref{app_runtime}. Appendix~\ref{sec:sensitivity} presents an analysis of the sensitivity of the proposed methods to different hyper-parameters.

\section{The replicating martingale problem}\label{sec_RMP}

We consider an economic scenario generator with a finite time horizon $T$, where time is in units of years. Randomness is generated by an $\mathbb{R}^d$-valued stochastic driver process $X = (X_1,\dots,X_T)$ with mutually independent components $X_t$. We denote by $\Q$ the distribution of $X$ on the path space $\Omega=\R^{dT}$. The flow of information is modeled by the filtration $\Fcal_t=\sigma(X_1,\dots,X_t)$, $t=1,\dots,T$, generated by $X$. If not otherwise stated, all financial values and cash flows are discounted by some numeraire, e.g., the cash account, and we assume that $\Q$ is the corresponding risk-neutral pricing measure.

Our objective is a portfolio of assets and liabilities whose present value is to be derived from its cash flow, which accumulates to a terminal value at $T$ given as function $f$ of $X$,
\[    f(X) =   \sum_{t=1}^T  \zeta_t ,   \]
with $\Fcal_t$-measurable time-$t$ cash flows $\zeta_t  = \zeta_t(X_1 ,\dots,X_t )$. As is the case in practice, we assume that all $\zeta_t$, and thus $f$, are exogenously given functions in $L^2_\Q$. Our goal is to find the cum-dividend value process of the portfolio, given as
\[  V_t = \E^\Q_t[f(X)] = \underbrace{\sum_{s=1}^t \zeta_s}_{\text{accumulated cash flow at $t$}} + \underbrace{\E^\Q_t\bigg[\sum_{s=t+1}^T \zeta_s\bigg]}_{\text{time-$t$ spot value}} ,\]
where $\E^\Q_t[\cdot]=\E^\Q[\cdot\mid\Fcal_t]$ denotes the $\Fcal_t$-conditional expectation. Formally speaking, $V_t$ is the $L^2_{\Q}$-martingale that replicates the terminal value $f(X)$.

There are many examples that fit this description, including complex financial derivatives, insurance liabilities, mortgage-backed instruments and other structured products. In most real-world cases, $V_t$ is not given in closed form but has to be estimated from simulating $f(X)$. This creates computational challenges, as the function $f$ may be costly to query and available computational budget is limited.

There are several risk management applications of the portfolio value process. We focus here on risk measurement. Insurance and banking solvency regulatory frameworks, such as Solvency II, Swiss Solvency Test, and Basel III, require capital calculations that are based on risk measurements of the value changes $\Delta V_t=V_t - V_{t-1}$ over risk periods $(t-1,t]$. The most common risk measures are \emph{value at risk} $\mbox{VaR} _{\alpha }(L)$, defined as the left $\alpha$-quantile of the distribution of the loss $L$, for some confidence level $\alpha \in (0,1)$, and \emph{expected shortfall} $\mbox{ES}_{\alpha }(L)=\frac {1}{1-\alpha }\int_{1-\alpha}^{1}\mbox{VaR}_{\gamma }(L)d\gamma$, see, e.g., \cite[Section~4.4]{foe_sch_04}. In insurance regulation, risk is usually measured for a one-year risk horizon and the economic capital is determined by $\rho[-\Delta V_1]$, where $\rho$ is a placeholder for either $\mbox{VaR} _{\alpha }$ or $\mbox{ES}_{\alpha }$. There are cases, however, where calculations require three-year capital projections which would be calculated by $\rho[-\Delta V_1]$, $\rho[-\Delta V_2]$ and $  \rho[-\Delta V_3]$ for years one, two and three respectively, which would require the joint distribution of (parts of) the entire process $V$.

Monte Carlo simulation is the standard method for computing risk measures. We refer to as standard \say{nested Monte Carlo} to the method that is illustrated in Figure~\ref{fig:nmc}. The outer stage of the simulation consists of a set of $n_0$ simulations, $X^{(i)}_{1:t}$, $i=1,\dots, n_0$, that are independent and identically distributed as the stochastic driver $X$ up to the risk horizon time $t$. The portfolio value $V_t^{(i)}$ for each outer scenario $X^{(i)}_{1:t}$ is estimated via the inner stage of Monte Carlo simulation by taking the sample mean of the $n_1$ inner simulations drawn for each outer simulation $X^{(i)}_{1:t}$.  Once the empirical distribution for $V_t^{(i)}$ has been obtained, the desired risk measure is approximated via its empirical equivalent. While straightforward to implement, standard nested Monte Carlo leads to an exponential increase of simulations if applied to various risk time horizons. Hence it often cannot be in used in practice due to the large computational effort required. We illustrate these issues with the following initial example of a European call option.

\begin{figure}[ht]
\begin{center}\includegraphics[width=0.5\textwidth]{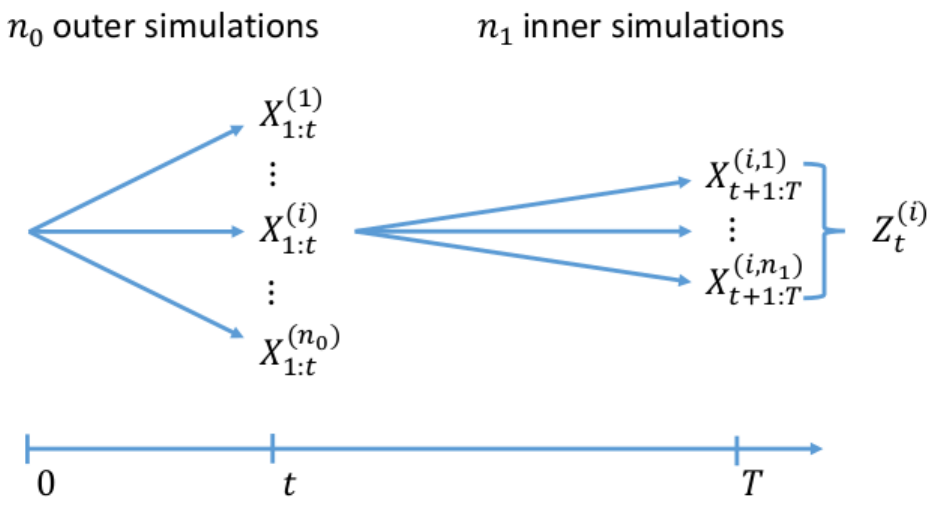}\end{center}
\caption{Nested Monte Carlo structure. The outer simulations span the time period 0 to $t$, and each of those serves as the starting point of a group of inner simulations.}
\label{fig:nmc}
\end{figure}

\section{Nested Monte Carlo for a European call option} \label{sec:call_nmc_new}

By way of preliminary example, we calculate the present value $V_0$ and expected shortfall of the one-year loss $-\Delta V_1$ for a European call on an equity index. We assume that we hold a short position in this call and therefore focus on the loss-making tail, that is, the tail where the equity index values are higher. 

The economic scenario generator is described in Appendix~\ref{ESG app}. That generator maps $X$ to a vector of economic factors. This vector contains several components, among them the equity index ${S}_t$ and the cash account $C_t$. The European call payoff at time $T$ is $\max({S}_T-K, 0)$ where $K$ is the strike of the option and $T$ its maturity. The variables ${S}_t$, $C_t$, and $K$ represent nominal---undiscounted---values. Hence for this portfolio there is one discounted cash flow at $T$, so that the terminal value function is \[f(X)=-\max({S}_T-K, 0)/C_T.\] 

In this example we work with two maturities $T=5$ and $T=40$ and for both strike price $K={S}_0=100$. The model has 3 stochastic drivers, $d=3$, and therefore there are 15 and 120 total dimensions---individual stochastic variables $X$---for $T=5$ and $T=40$, respectively.

To establish the \say{benchmark value}---the closest we can get to the \say{ground truth} without using a closed form solution\footnote{There is a closed-form solution for this particular example but our intention is to exemplify the general case, which does not have one.}---we first run a very large nested Monte Carlo simulation, with \readdata{groundtruth_outer_sims} outer simulations ($n_0$) and \readdata{groundtruth_inner_sims} inner simulations ($n_1$). We then calculate the 99\% expected shortfall on the loss-making tail. When working with an empirical distribution, as we do here based on simulation data, the expected shortfall reduces to simply averaging the 1\% worst results.

Before we can test the quality of the nested Monte Carlo estimator for a finite simulation budget, we need to decide how to split said budget between inner and outer simulations. Different combinations of outer and inner simulations will produce different nested Monte Carlo estimators. The bias and variance of the nested Monte Carlo estimator depends on both the amount of outer and inner simulations. Each combination has a different bias and variance and therefore a different mean absolute error. This is shown in Table \ref{tab:call_mae_es_nmc} for a fixed total budget of $n_0\times n_1=50{,}000$ simulations.\footnote{In keeping with industry convention, the present value is reported as positive
number, $|V_0|$, without taking into account the fact that it is a short position.} In each individual estimation (inner--outer combination), the error is calculated as a percentage of the benchmark value and therefore we refer to the quality metric as \say{MApE}, for Mean Absolute percentage Error. The formula is described in Appendix~\ref{sec:qual_metrics}. It is important to note that since the expected shortfall is calculated on $-\Delta V_1$, the error on $V_0$ is also part of the error on $\mbox{ES}_{{99\%} }(-\Delta V_1)$.

\begin{table}[ht]
    \centering
    \small
    \caption{Nested Monte Carlo, comparison of expected shortfall for nested Monte Carlo for various splits of total $n_0\times n_1=50{,}000$ simulations, where inner simulations $n_1$ range from 1 to 500.}
    \label{tab:call_mae_es_nmc}

\begin{tabular}{@{}rrrrrrrrrr@{}}
\toprule
                  & \textbf{Benchmark} & \multicolumn{8}{l}{\textbf{Nested Monte Carlo (MApE by inner simulations)}} \\
\textbf{Maturity} &   \textbf{(value)} &                                                       1 &      10 &      25 &      50 &    100 &    250 &    400 &    500 \\
\midrule
                5 &            56.8588 &                                            233.4\% &  29.5\% &  11.8\% &   7.9\% &  9.3\% & 14.4\% & 19.0\% & 20.4\% \\
               40 &            62.6205 &                                           2027.0\% & 463.6\% & 232.1\% & 129.2\% & 66.6\% & 24.6\% & 21.4\% & 19.1\% \\
\bottomrule
\end{tabular}
\end{table}

As described in \cite{broadie2015risk}, it is not possible, in general cases, to decide for a finite budget how to make this inner-outer split in an optimal way. In this paper we will err on the side of presenting optimistic risk figures for nested Monte Carlo estimations, by choosing an optimal split corresponding to the smallest MApE ES. This bias towards more accurate nested Monte Carlo risk estimations than possible in practice will not be a problem for our analysis, since we find that the proposed replicating martingale method produces more accurate results than the optimal nested Monte Carlo, which is already better than what one would obtain in practice. Note that this problem does not exist for regression-based methods, since it is not necessary to split the training budget.

Table~\ref{tab:call_mae_nmc} shows the MApE for present value and ES of the optimal combination for varying total sample size $n_0\times n_1$. We can see that the estimation of $\mbox{ES}_{{99\%} }(-\Delta V_1)$, performed via nested Monte Carlo, is much more affected by the sample size than the estimation of $V_0$. In fact, for the estimation of $V_0$ it would be optimal not to split the simulations. For example, the MApE for the present value for sample size 50,000 would be 0.6\% and 1\%, as opposed to 1.7\% and 2.4\% as shown in the last row, respectively.

\begin{table}[ht]
    \centering
    \small
    \caption{Nested Monte Carlo, comparison of present value and expected shortfall MApE (in percentage points) for optimal inner--outer split $n_0\times n_1$, for total sample size varying from 1,000 to 50,000.}
    \label{tab:call_mae_nmc}
\begin{tabular}{@{}rrrrr@{}}
\toprule
                 & \multicolumn{2}{r}{\textbf{Present Value}} & \multicolumn{2}{r}{\textbf{Expected Shortfall}} \\
\textbf{Samples} &            Maturity: 5 & Maturity: 40 &                 Maturity: 5 & Maturity: 40 \\
\midrule
           1,000 &                    6.5 &          7.5 &                        26.7 &        403.6 \\
           5,000 &                    4.0 &          3.9 &                        15.9 &        110.9 \\
          10,000 &                    3.8 &          3.1 &                        14.5 &         56.6 \\
          50,000 &                    1.7 &          2.4 &                         7.9 &         19.1 \\
\bottomrule
\end{tabular}

\end{table}

These Monte Carlo results will be used throughout the paper as one of the reference methods against which we measure our approach. Even if the limitations of standard nested Monte Carlo mean that it is not the main approach used by practitioners in large-scale problems, it remains the simplest way to approach the estimation of conditional expectations, and it provides a common baseline that both practitioners and academics can easily understand.

\section{Machine learning approach} \label{sec:ml_approach}

We now present our method, which addresses the computational challenges described and illustrated by the above example. Thereto, in a first step, we directly approximate the terminal value function $f$ by projecting it on a finite-dimensional subspace in $L^2_\Q$ that is spanned by an optimally chosen set of basis functions in $X$, which constitute the feature map. We assume that these basis functions admit conditional expectations in closed form. We thus obtain, as second step, an approximation of the portfolio value process in closed form. In practical applications, we learn the approximation of $f$ from a finite sample of $X$, which induces an empirical measure that proxies the model population measure $\Q$. The performance of this approach hinges on the choice of the basis functions. We formalize and discuss all this in detail in the following.

\subsection{Finite-dimensional approximation}

Fix a dimension $m\in\N$, and let $\Theta$ be a parameter set such that, for every $\theta\in\Theta$, there are functions $\phi_{\theta,i}:\R^{dT}\to\R$ in $L^2_\Q$, for $i=1,\dots,m$. These functions form the \say{feature map}  $\phi_\theta =(\phi_{\theta,1},\dots,\phi_{\theta,m})^\top$. For any $\theta\in\Theta$, the $L^2_\Q$-projection of $f$ on $\spn\{\phi_{\theta,1},\dots,\phi_{\theta,m}\}$ is given by $f_\theta=\sum_{i=1}^m \phi_{\theta,i} \beta_{\theta,i}=\phi_\theta^\top\beta_\theta$, where $\beta=\beta_\theta\in\R^m$ solves
\begin{equation}\label{eqnbetaAnew}
   \min_{\beta\in \R^m} \| f -\phi_\theta^\top\beta\|_{L^2_\Q} .
\end{equation}

We assume that the conditional expectations $\E^\Q_t[\phi_{\theta}(X)]=G_{\theta,t}(X_1,\dots,X_t)$ are given in \say{closed form}, in the sense that the conditional expectation functions $G_{\theta,t}:\R^{d\times t}\to\R^m$, given by
\begin{equation}\label{Gdef}
  G_{\theta,t}(x_1,\dots,x_t)= \E^\Q[\phi_{\theta}(x_1,\dots,x_t,X_{t+1},\dots,X_T)] ,
\end{equation}
can be efficiently evaluated at very low computational cost. As a result, we obtain the approximate value process
\begin{equation}\label{eq:rep_mart}
    V_{\theta,t} =  G_{\theta,t}(X_1,\dots,X_t)^\top\beta_\theta ,
\end{equation}
in closed form. This requirement is a key distinction of our method from other methods in the literature. We obtain the value process of the portfolio by regressing against its terminal value. Our approach, therefore, falls within the \say{regress later} category first mentioned in \cite{glasserman2002simulation}. As we will see in the numerical examples below, this approach performs better than the alternative \say{regress now}.

So, how good is our approximation \eqref{eq:rep_mart}? By Doob's inequality, making use of the martingale property of the value process, we obtain an upper bound on the pathwise maximal $L^1_\Pa$-error,
\begin{equation}\label{elembound}
 \textstyle \left\| \max_{t\le T} |V_t - V_{\theta,t}|\right\|_{L^1_\Pa} \le  \|\frac{d\Pa}{d\Q} \|_{L^2_\Q} \,\left\|\max_{t\le T} |V_t -V_{\theta,t}|\right\|_{L^2_\Q}  \le 2   \|\frac{d\Pa}{d\Q} \|_{L^2_\Q}  \| f-f_{\theta}\|_{L^2_\Q} .
\end{equation}
Note that $\|\frac{d\Pa}{d\Q} \|_{L^2_\Q}$ is known by the modeler. The $L^2_\Q$-approximation error on the right hand side of \eqref{elembound} is the objective in \eqref{eqnbetaAnew}. In practice, it can be estimated by Monte Carlo based on the training sample used to learn $f_\theta$. Hence, albeit elementary, this inequality gives a practical upper bound on the relevant $L^1_\Pa$-error of the approximation of $V_t$.

\subsection{Feature learning}
We now learn the parameter $\theta$ from the data. This is a second key distinction of our method, which allows to tackle the notorious curse of dimensionality that comes with polynomial feature maps, as we shall see below. Thereto we minimize the approximation error by an optimal choice of $\theta$, which leads to the non-convex optimization problem
\begin{equation}\label{optthebet}
      \min_{(\theta,\beta)\in\Theta\times\R^m} \| f -\phi_\theta^\top\beta\|_{L^2_\Q}.
 \end{equation}
Here is an elementary existence result.
\begin{lemma}\label{lemexi}
Assume
\begin{enumerate}
 \item $\Theta$ is compact,
 \item $\theta\mapsto \phi_{\theta,i}: \Theta\to L^2_\Q$ is continuous for all $i$,
 \item\label{lemexi3} $\{\phi_{\theta,1},\dots,\phi_{\theta,m}\}$ is linearly independent in $L^2_\Q$ for all $\theta\in\Theta$.
\end{enumerate}
Then there exists a solution to \eqref{optthebet}.
  \end{lemma}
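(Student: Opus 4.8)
The plan is to prove existence by a direct compactness argument. The natural approach is to reduce the joint minimization over $(\theta,\beta)\in\Theta\times\R^m$ to a minimization over the compact set $\Theta$ alone, by first solving the inner least-squares problem in $\beta$ for each fixed $\theta$, and then showing that the resulting reduced objective is lower semicontinuous (in fact continuous) in $\theta$. Since $\Theta$ is compact, a continuous function attains its minimum, which delivers an optimal $\theta^\star$; combined with the corresponding $\beta_{\theta^\star}$ this solves \eqref{optthebet}.

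First I would define, for each $\theta\in\Theta$, the reduced objective
\[
   g(\theta) = \min_{\beta\in\R^m}\|f-\phi_\theta^\top\beta\|_{L^2_\Q} = \|f-f_\theta\|_{L^2_\Q},
\]
where $f_\theta=\phi_\theta^\top\beta_\theta$ is the $L^2_\Q$-projection of $f$ onto $\spn\{\phi_{\theta,1},\dots,\phi_{\theta,m}\}$. Assumption \ref{lemexi3} guarantees this span is a genuine $m$-dimensional subspace, so the projection $f_\theta$ and the minimizing coefficient vector $\beta_\theta$ exist and are unique; concretely $\beta_\theta = A_\theta^{-1} b_\theta$, where $A_\theta$ is the Gram matrix with entries $\langle\phi_{\theta,i},\phi_{\theta,j}\rangle_{L^2_\Q}$ and $b_\theta$ has entries $\langle f,\phi_{\theta,i}\rangle_{L^2_\Q}$. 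The key observation is that $g(\theta)\ge 0$ is bounded below, so $\inf_{\theta\in\Theta}g(\theta)$ is a well-defined real number.

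The crux of the argument is the continuity of $\theta\mapsto g(\theta)$. Here I would use assumption (ii): the continuity of each $\theta\mapsto\phi_{\theta,i}$ into $L^2_\Q$ makes the inner products $\langle\phi_{\theta,i},\phi_{\theta,j}\rangle_{L^2_\Q}$ and $\langle f,\phi_{\theta,i}\rangle_{L^2_\Q}$ continuous in $\theta$ (by Cauchy--Schwarz and the triangle inequality, small $L^2$-perturbations of the $\phi_{\theta,i}$ produce small perturbations of these inner products). Hence $\theta\mapsto A_\theta$ and $\theta\mapsto b_\theta$ are continuous. The main obstacle I anticipate is controlling the inverse $A_\theta^{-1}$: I need $A_\theta$ to be uniformly invertible over $\Theta$, i.e.\ its smallest eigenvalue bounded away from zero. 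This follows because $\det A_\theta$ is continuous and strictly positive on the compact set $\Theta$ (strict positivity is exactly the linear-independence assumption \ref{lemexi3}), so it attains a positive minimum; therefore $\theta\mapsto A_\theta^{-1}$ is continuous, $\beta_\theta=A_\theta^{-1}b_\theta$ is continuous, and finally $g(\theta)=\|f-\phi_\theta^\top\beta_\theta\|_{L^2_\Q}$ is continuous.

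With continuity of $g$ on the compact set $\Theta$ established, the classical Weierstrass extreme value theorem yields a minimizer $\theta^\star\in\Theta$ of $g$. Setting $\beta^\star=\beta_{\theta^\star}$, the pair $(\theta^\star,\beta^\star)$ achieves $\inf_{(\theta,\beta)}\|f-\phi_\theta^\top\beta\|_{L^2_\Q}$, completing the proof. An alternative route that avoids explicitly inverting the Gram matrix would be to argue lower semicontinuity of $g$ directly—if $\theta_n\to\theta$ realizes the infimum, one extracts a bounded subsequence of coefficient vectors $\beta_{\theta_n}$ (bounded thanks again to uniform invertibility of $A_\theta$, which prevents the coefficients from escaping to infinity) and passes to the limit—but the Gram-matrix computation is cleaner and makes the role of assumption \ref{lemexi3} transparent, so that is the version I would write up.
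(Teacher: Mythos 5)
Your proposal is correct and follows essentially the same route as the paper: both reduce the joint problem to an optimization over $\Theta$ alone via the normal equations $\beta_\theta=\langle\phi_\theta,\phi_\theta^\top\rangle_{L^2_\Q}^{-1}\langle\phi_\theta,f\rangle_{L^2_\Q}$, use assumption \ref{lemexi3} for invertibility of the Gram matrix and assumption (ii) for continuity of the reduced objective, and conclude by compactness of $\Theta$. Your extra step bounding $\det A_\theta$ away from zero is not needed for continuity of $A_\theta^{-1}$ (pointwise invertibility plus continuity of $A_\theta$ suffices), but it is harmless.
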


The assumptions in Lemma \ref{lemexi} cannot be relaxed in general. This is shown by the following example.

\begin{example}
Let $\Q=\frac{1}{2}\sum_{i=1}^2 \delta_{x^{(i)}}$ be a discrete measure supported on two points $x^{(1)}\neq x^{(2)}$, so that every function $f\in L^2_\Q$ can be identified with the $\R^2$-vector $(f(x^{(1)}),f(x^{(2)}))$. We let $f=(1,0)$, $m=1$, and either
\begin{enumerate}

\item $\Theta = (0,1]$ (\emph{not compact}) and $\phi_\theta = \frac{(1,\theta)}{\sqrt{1+\theta^2}}$, or

\item $\Theta = [0,1]$ and $\phi_\theta =  \frac{(1_{(0,1]}(\theta),\theta+1_{\{0\}}(\theta))}{\sqrt{1+\theta^2}}$, so that $\phi_0=(0,1)$ (\emph{not continuous}), or

\item $\Theta = [0,1]$ and $\phi_\theta = \theta\frac{(1,\theta)}{\sqrt{1+\theta^2}}$, so that $\phi_0=(0,0)$ (\emph{not linearly independent}).

\end{enumerate}
For either case, we have $\inf_{(\theta,\beta)\in \Theta\times\R} \|f - \phi_\theta\beta\|_{\R^2} = \lim_{\theta\to 0} \|f - \phi_\theta\beta_\theta\|_{\R^2} =0$, but the infimum is not attained, $\|f - \phi_\theta\beta\|_{\R^2}>0$ for all $(\theta,\beta)\in \Theta\times\R$.
\end{example}

In practice, there are many factors that determine whether the approximation \eqref{optthebet} will be close to the true $f$. One of them is the relationship between the dimension of the random driver, $dT$, and the size of the training sample. Since in real-world applications the sample size is limited by practical constraints, it is necessary to reduce the effective dimension of the stochastic driver. This motivates the use of a linear dimensionality reduction, as follows. We henceforth assume that the feature map is parametrized in the form
\begin{equation}\label{phigAb}
 \phi_{\theta,i}(x) = g_i(A^\top x + b)
\end{equation}
for some exogenously given functions $g_i:\R^p\to \R$, $i=1,\dots,m$, for some $p\in\N$, and the parameter $\theta={ (A,b)}$ consists of a weight matrix $A$ and a bias vector $b$, for a subset $\Theta\subseteq { \R^{dT\times p}\times\R^p}$.

As a notational convention, to capture the evaluation of $\phi_{\theta}$ at the $\R^{d\times T}$-valued stochastic driver $X=(X_1,\dots,X_T)$, we decompose the $dT\times p$-matrix $A$ into $T$ consecutive blocks $A_t \in \R^{d\times p}$ such that $A^\top=(A_1^\top,\dots,A_T^\top)$. Then we have $\textstyle A^\top \vect(X) = \sum_{t=1}^T A_t^\top X_t$ and
\[  \textstyle\phi_{\theta,i}(X) \equiv \phi_{\theta,i}(\vect(X)) = g_i(\sum_{t=1}^T A_t^\top X_t+b).\]
The conditional expectation functions in \eqref{Gdef} read component-wise for $i=1,\dots,m$ as
\begin{equation}\label{eqnCExp}
 \textstyle G_{\theta,t,i}(x_1,\dots,x_t)=\E^\Q [g_i(b+\sum_{s=1}^t A_s^\top x_s +\sum_{s=t+1}^T A_s^\top X_s)].
\end{equation}

We remark that our approach does not replace the original stochastic driver $X$ by $A^\top X+b$ in general, only for the specific portfolio $f$ to which it is calibrated through \eqref{optthebet}. In the sequel, we study three different specifications of the type \eqref{phigAb} in more detail: a full polynomial basis, a polynomial feature map with linear dimensionality reduction, and a shallow neural network.

\subsection{Full polynomial basis} \label{sec:full_pol_base_part1}

We start with the non-weighted and non-biased case. We formally let $\Theta=\{(I_{dT},0)\}$ be the singleton consisting of the $dT\times dT$-identity matrix $A=I_{dT}$ and zero bias vector $b=0$. Accordingly, we omit the parameter and write shorthand $\phi_{(I_{dT},0)}\equiv \phi$. Problem \eqref{optthebet} boils down to the projection \eqref{eqnbetaAnew}. We let the feature map $\phi$ be composed of a basis of the space of all polynomials of degree $\delta$ or less,
\[ {\rm Pol}_\delta(\R^{dT})= \spn\{x^{\bm\alpha}\mid \bm\alpha\in\N_0^{dT},\,|\bm\alpha|\le \delta\}.  \]
In order that $\phi_i\in L^2_\Q$, we assume that
\begin{equation}\label{assdeltaint}
  \E_\Q[\|X\|^{2\delta}]<\infty.
\end{equation}

No attempt is made at selecting individual basis elements from within ${\rm Pol}_\delta(\R^{dT})$ at this stage, every polynomial is used in the projection. As a consequence---albeit leading to closed form expressions in \eqref{GfullPB} below---this feature map suffers the curse of dimensionality from the rapid growth of the number of basis functions $m$ as a function of $d$ and $T$,
\[\textstyle m=\dim {\rm Pol}_\delta(\R^{dT}) = \binom{dT + \delta}{dT}. \]
Table \ref{tabdimPol} shows that the dimension $m$ quickly becomes larger than the training sample size in practice.\footnote{Strictly speaking, the dimension of the linear span of the functions $\phi_i$ in $L^2_\Q$ could be less than $m$, because they may be linearly dependent as elements in $L^2_\Q$. This is in particular the case when $\Q$ is an empirical measure from a sample of size $n$, as described in Subsection \ref{sec:fin_sample_approx}. In this case, $m>n$ would simply lead to an exact interpolation of $f$, which likely will result in overfitting.}

\begin{table}[ht]
\centering
  \begin{tabular}{r||r |r}
$T$   &     \multicolumn{1}{c|}{$d=3$ }  &      \multicolumn{1}{c}{$d=5$}  \\
\hline
5 & 816&   3{,}276   \\
40 & 302{,}621&   1{,}373{,}701
\end{tabular}
\caption{Dimension $m=\dim {\rm Pol}_\delta(\R^{dT})$ for $\delta=3$.}\label{tabdimPol}
\end{table}

The exact form of the conditional expectation \eqref{Gdef} depends on the choice of $\phi$ and the distribution $\Q=\otimes_{t=1}^T \Q_t$ of $X$. Choosing an orthogonal basis of polynomials in $L_{\Q}^2=\otimes_{t=1}^T L^2_{\Q_t}$ can greatly simplify the calculations. In view of \cite[Theorem 8.25]{sul_15}, there is a system of orthogonal polynomials $\{g_{\bm\alpha}\mid \bm{\alpha} \in \N_0^{dT}\}$ on $\R^{dT}$ for $\Q$ that can be expressed as
\[ g_{\bm\alpha}(x) = \prod_{t=1}^{T} h_{t,\bm\alpha_t}(x_t),\quad x=\vect(x_1,\dots,x_t), \quad \bm{\alpha}=(\bm\alpha_1,\dots,\bm\alpha_T), \]
where $\{h_{t,\bm\alpha_t}\mid \bm{\alpha_t}\in \N_0^{d}\}$ is a system of orthogonal polynomials on $\R^d$ for $\Q_t$, for every $t=1,\dots,T$. Moreover, $\deg g_{\bm \alpha}=|\bm\alpha|$ and $\deg h_{t,\bm \alpha_t}=|\bm\alpha_t|$. Now choose an index mapping \footnote{This index mapping only performs the ordering of the elements which is required to conveniently and formally write the conditional expectation function. This index mapping does not perform any selection of a subset of the basis and any mapping would yield the same final results.} $\{1,\dots,m\}\ni i \mapsto \bm \alpha^i \in \{\bm\alpha\in \N_0^{dT}\mid |\bm\alpha|\le\delta\}$, then we obtain an orthogonal basis of ${\rm Pol}_\delta(\R^{dT})$ for $\Q$ by setting $\phi_i = g_{\bm\alpha^i}$. The conditional expectation functions \eqref{eqnCExp}, where we omit $\theta$, are then by orthogonality of $h_{s,\bm \alpha^i_s}$ given in closed form as
\begin{equation}\label{GfullPB}
  G_{t,i}(x_1,\dots,x_t) = \prod_{s=1}^{t} h_{s,\bm\alpha^i_s}(x_s)\prod_{s=t+1}^T 1_{\bm \alpha^i_{s} = \bm 0} .
\end{equation}
This extends to the unconditional expectations, $\E^\Q[\phi_i(X)]=\phi_i(0)1_{\bm \alpha^i=\bm 0}$.

\begin{example}
Consider the multinormal case $\vect(X)\sim N(0,I_{d T })$. Here we can choose $g_{\bm\alpha}$, and $h_{t,\bm\alpha}\equiv h_{\bm\alpha}$, as the multivariate (probabilists') Hermite polynomials of order $\bm{\alpha}\in\N_0^{dT}$ on $\R^{dT}$, and order $\bm{\alpha}\in\N_0^{d}$ on $\R^{d}$, respectively.
\end{example}

\subsection{Polynomial feature map with linear dimensionality reduction} \label{sec:ldr_part1}

We now tackle the curse of dimensionality of the above full polynomial basis. Thereto we let $p\le dT$ and ${ \{g_1,\dots,g_m\}}$ be a basis of ${ {\rm Pol}_\delta(\R^{p})}$, and we consider all feature maps \eqref{phigAb} for weight matrices $A$ with full rank and bias vectors $b$. As above, we assume that \eqref{assdeltaint} holds, so that $\phi_{(A,b),i}\in L^2_\Q$. The following theorem shows that we can assume that $b=0$ and $A$ lies in the \say{Stiefel manifold} $V_p(\R^{dT}) =\{ A\in\R^{ dT \times p}\mid A^\top A={I}_p\}$, the set of all orthonormal $p$-frames in $\mathbb{R} ^{dT}$.

\begin{theorem}\label{thmVk}
For any $A , \tilde A\in\R^{dT\times p}$ with full rank and $b\in\R^p$, the following are equivalent:
\begin{enumerate}
  \item $\tilde A\in V_p(\R^{dT})$ and $\spn\{\phi_{(A,b),1},\dots,\phi_{(A,b),m}\}=\spn\{\phi_{(\tilde A,0),1},\dots,\phi_{(\tilde A,0),m}\}$,
  \item $\tilde A=A(A^\top A)^{-1/2}U$ for some orthogonal $p\times p$-matrix $U$.
\end{enumerate}
\end{theorem}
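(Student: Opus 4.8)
The plan is to reduce the span condition in (i) to a purely geometric condition on the column spaces of $A$ and $\tilde A$, and then to solve a small matrix equation to pin down $\tilde A$. Write $W_{A,b} := \spn\{\phi_{(A,b),1},\dots,\phi_{(A,b),m}\}$. Since $\{g_1,\dots,g_m\}$ is a basis of ${\rm Pol}_\delta(\R^p)$, this span is exactly the image of ${\rm Pol}_\delta(\R^p)$ under precomposition with the affine map $x\mapsto A^\top x+b$, that is, $W_{A,b}=\{\,x\mapsto q(A^\top x+b)\mid q\in{\rm Pol}_\delta(\R^p)\,\}$. The guiding observation is that ${\rm Pol}_\delta(\R^p)$ is invariant under every invertible affine reparametrization $y\mapsto My+c$ of its argument.

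First I would prove the key lemma: for full-rank $A,\tilde A$ and arbitrary $b,\tilde b$, one has $W_{A,b}=W_{\tilde A,\tilde b}$ if and only if $\Ima A=\Ima\tilde A$, so that the biases turn out to be irrelevant to the span. The ``if'' direction is immediate from the invariance observation: if $\Ima A=\Ima\tilde A$ then $\tilde A=AR$ for an invertible $p\times p$ matrix $R$, whence $\tilde A^\top x+\tilde b=R^\top(A^\top x+b)+(\tilde b-R^\top b)$ is an invertible affine image of $A^\top x+b$, and composing ${\rm Pol}_\delta(\R^p)$ with this map carries it bijectively onto itself. For the ``only if'' direction I would extract the degree-$\le 1$ part: because $A$ has full rank, precomposition with $x\mapsto A^\top x+b$ preserves the degree of a polynomial, so $W_{A,b}\cap{\rm Pol}_1(\R^{dT})=\R\cdot 1\oplus\{\,x\mapsto v^\top x\mid v\in\Ima A\,\}$. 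Equality of the spans then forces equality of these degree-$\le 1$ slices, and reading off the linear parts modulo constants gives $\Ima A=\Ima\tilde A$.

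With the lemma in hand the theorem becomes a linear-algebra exercise. For (ii)$\Rightarrow$(i), setting $B=(A^\top A)^{-1/2}$ a direct computation gives $(ABU)^\top(ABU)=U^\top B(A^\top A)BU=I_p$, using $B=B^\top$ and $B(A^\top A)B=I_p$, so $\tilde A=ABU\in V_p(\R^{dT})$; moreover $\Ima\tilde A=\Ima A$ since $BU$ is invertible, and the lemma yields the span equality. For (i)$\Rightarrow$(ii), the lemma gives $\Ima\tilde A=\Ima A$, hence $\tilde A=AR$ for some invertible $R$, and the Stiefel condition $\tilde A^\top\tilde A=I_p$ reads $R^\top(A^\top A)R=I_p$; writing $S=(A^\top A)^{1/2}$ this says $(SR)^\top(SR)=I_p$, i.e.\ $SR=U$ is orthogonal, so $R=S^{-1}U=(A^\top A)^{-1/2}U$ and $\tilde A=A(A^\top A)^{-1/2}U$.

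I expect the main obstacle to be the ``only if'' direction of the lemma, which is where the full-rank hypothesis does its work: one must argue that the span genuinely remembers $\Ima A$, and this relies on degree preservation, namely that a polynomial cannot lose degree after composition with $x\mapsto A^\top x+b$ precisely because $A$ is surjective onto $\R^p$. A second point I would address explicitly is that the spans live in $L^2_\Q$ rather than in the space of polynomial functions; the argument above is cleanest for function-space spans and transfers verbatim to $L^2_\Q$ provided $\Q$ identifies polynomials of degree $\le\delta$—that is, no nonzero $q\in{\rm Pol}_\delta(\R^{dT})$ vanishes $\Q$-almost surely—which holds in the non-degenerate (e.g.\ multinormal) setting considered here.
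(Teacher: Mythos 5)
Your proof is correct and follows the same overall skeleton as the paper's: both reduce the theorem to a key lemma asserting that, for full-rank $A,\tilde A$ and arbitrary biases, equality of the spans is equivalent to $\Ima A=\Ima\tilde A$ (equivalently $\tilde A=AS$ with $S$ invertible), and both finish with the identical computation $R^\top(A^\top A)R=I_p\Rightarrow R=(A^\top A)^{-1/2}U$. The only substantive difference is in the hard direction of the lemma. The paper argues by contradiction: it picks $\tilde x\in\ker\tilde A^\top\setminus\ker A^\top$, exhibits the linear function $h(x)=x^\top AA^\top\tilde x$ as an element of $\spn\phi_{(A,b)}$, and observes that every element of $\spn\phi_{(\tilde A,\tilde b)}$ is constant along the line $t\mapsto t\tilde x$ while $h$ is not. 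You instead intersect both spans with ${\rm Pol}_1(\R^{dT})$ and use the fact that precomposition with $x\mapsto A^\top x+b$ preserves polynomial degree when $A$ has full rank, so the degree-$\le 1$ slice of the span equals $\R\cdot 1\oplus\{x\mapsto v^\top x\mid v\in\Ima A\}$ and hence remembers $\Ima A$. Both arguments are valid; yours is slightly more structural, the paper's more concrete. A point in your favour is that you explicitly note the spans live in $L^2_\Q$ and that the argument requires $\Q$ not to annihilate any nonzero polynomial of degree $\le\delta$; the paper's pointwise evaluation along the line $t\tilde x$ implicitly relies on the same non-degeneracy without stating it.
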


In view of Theorem \ref{thmVk} the parameter set can be chosen to be $\Theta=V_p(\R^{dT})$ yielding feature maps of the form \[\phi_{(A,0),i}(x)\equiv\phi_{A,i}(x)=g_i(A^\top x)\]
without loss of generality. We arrive at the following existence and non-uniqueness result.
\begin{theorem}\label{thmpolyuniqueN}
For the polynomial feature map, there exists a minimizer in $\Theta= V_p(\R^{dT})$ of \eqref{optthebet}. However, uniqueness does not hold, in the sense that the optimal subspace $\spn\{\phi_{A,1},\dots,\phi_{A,m}\}$ is not unique, in general.
\end{theorem}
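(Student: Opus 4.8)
The plan is to handle the two claims separately: existence by verifying the three hypotheses of Lemma~\ref{lemexi} on the Stiefel manifold, and non-uniqueness by exhibiting a symmetric instance that forces many minimizers. For existence I would invoke Lemma~\ref{lemexi} with parameter set $\Theta=V_p(\R^{dT})$ and feature map $\phi_{A,i}(x)=g_i(A^\top x)$, the reduction to orthonormal frames with zero bias being already granted by Theorem~\ref{thmVk} (take $U=I$, so the infimum over full-rank $(A,b)$ equals the infimum over $V_p(\R^{dT})$). Compactness (hypothesis~1) is immediate: $V_p(\R^{dT})$ is the zero set of the continuous map $A\mapsto A^\top A-I_p$, hence closed, and it is bounded since $\|A\|_F^2=\operatorname{tr}(A^\top A)=p$ for every $A\in V_p(\R^{dT})$. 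For continuity (hypothesis~2), I would fix $A_n\to A$ in $V_p(\R^{dT})$, note that $g_i(A_n^\top x)\to g_i(A^\top x)$ pointwise, and dominate: each $g_i$ is a polynomial of degree at most $\delta$ and every $A\in V_p(\R^{dT})$ has operator norm one, so there is a constant $C_i$ with $|g_i(A^\top x)|\le C_i(1+\|x\|^\delta)$ uniformly in $A$; the square of this bound is $\Q$-integrable by~\eqref{assdeltaint}, and dominated convergence gives convergence in $L^2_\Q$.

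Linear independence (hypothesis~3) is the one point that needs a hypothesis on $\Q$. A relation $\sum_i c_i g_i(A^\top x)=0$ in $L^2_\Q$ reads $q(A^\top x)=0$ $\Q$-a.s.\ for the polynomial $q=\sum_i c_i g_i\in{\rm Pol}_\delta(\R^p)$. Provided no nonzero polynomial of degree at most $\delta$ vanishes $\Q$-a.s.---equivalently ${\rm Pol}_\delta(\R^{dT})$ embeds into $L^2_\Q$, which holds for the multinormal $\Q$ of the running examples---this forces $q\circ A^\top\equiv 0$ as a polynomial, and since $A\in V_p(\R^{dT})$ has full column rank the linear map $A^\top$ is surjective, so $q\equiv 0$ and hence $c=0$ because $\{g_1,\dots,g_m\}$ is a basis. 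With the three hypotheses in hand, Lemma~\ref{lemexi} delivers a minimizer in $V_p(\R^{dT})$.

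For non-uniqueness I would give an explicit symmetric instance. Take $p=1$ (so $V_1(\R^{dT})$ is the unit sphere), $\delta\ge 2$, $dT\ge 2$, the rotation-invariant target $f(x)=\|x\|^2$, and $\Q=N(0,I_{dT})$. For any orthogonal $R$ the law of $Rx$ under $\Q$ is again $\Q$ and $f(Rx)=f(x)$, so the projection error $\min_\beta\|f-\sum_i\beta_i g_i(a^\top\cdot)\|_{L^2_\Q}$ is unchanged under $a\mapsto Ra$; every unit vector $a$ is therefore a minimizer. The associated optimal subspace is $\spn\{g_i(a^\top\cdot)\}={\rm Pol}_\delta(a^\top x)$, and these are genuinely different for different directions---e.g.\ $\spn\{1,x_1,\dots,x_1^\delta\}$ for $a=e_1$ versus $\spn\{1,x_2,\dots,x_2^\delta\}$ for $a=e_2$---which shows the optimal subspace is not unique. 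The same argument works for any $p$ and indicates, more conceptually, that whenever $(f,\Q)$ is invariant under a group acting non-trivially on the Grassmannian of $p$-planes, the optimal subspace must fail to be unique.

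I expect the existence argument to be routine; its only delicate step is hypothesis~3, where I must isolate the non-degeneracy of $\Q$ (that polynomials embed into $L^2_\Q$) rather than take linear independence for granted, since the footnote already warns that it can fail for empirical measures. For non-uniqueness the real content is choosing an instance in which the symmetry demonstrably moves the optimal plane; the explicit $e_1$ versus $e_2$ comparison is what certifies that the distinct minimizers correspond to distinct subspaces, as opposed to the harmless right $O(p)$ gauge freedom of Theorem~\ref{thmVk}, under which the span is preserved.
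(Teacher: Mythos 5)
Your proof is correct and follows essentially the same route as the paper: existence via Lemma~\ref{lemexi} on the compact Stiefel manifold, and non-uniqueness via an $(f,\Q)$ pair invariant under an orthogonal map combined with Theorem~\ref{thmVk} to certify that the optimal spans genuinely differ. The paper dispatches the hypotheses of Lemma~\ref{lemexi} \emph{by inspection}, so your explicit verification---in particular isolating the non-degeneracy of $\Q$ needed for linear independence, and the concrete $e_1$ versus $e_2$ comparison---fills in detail rather than taking a different approach.
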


Problem \eqref{optthebet} corresponds to a \say{linear dimensionality reduction} with matrix manifold $V_p(\R^{dT})\times \R^m$ in the spirit of \cite{cunningham2015linear}. The dimensionality reduction is produced exclusively by the linear mapping $A$ of $\R^{dT}$ onto $\R^p$. There is no other restriction imposed on ${\rm Pol}_\delta(\R^{p})$, every polynomial basis function $g_i$ is used. And yet, the dimensionality reduction compared to the full polynomial basis of ${\rm Pol}_\delta(\R^{dT})$ is significant. Indeed, the total dimension of the optimization problem \eqref{optthebet} is given by the sum of $\dim V_p(\R^{dT}) = dT p -\frac{1}{2}p(p+1)$ plus $m=\dim{ {\rm Pol}_\delta(\R^{p})}$. This sum can be kept low by choosing $p$ small enough. Table \ref{tabdimPolLDR} shows that the total dimension of \eqref{optthebet} remains moderate compared to the corresponding figures of the full polynomial basis from Table \ref{tabdimPol}.

\begin{table}[ht]
\centering
  \begin{tabular}{r||r |r}
$T$   &    \multicolumn{1}{c|}{$d=3$, $p=3$}  &      \multicolumn{1}{c}{$d=5$, $p=10$} \\
\hline
5 & $9+20=29$ &   $195 + 286=481$   \\
40 & $114+20=134$ &   $1{,}945 + 286= 2{,}231$
\end{tabular}
\caption{Total dimension $\dim V_p(\R^{dT})+\dim {\rm Pol}_\delta(\R^{p})$ for $\delta=3$.}\label{tabdimPolLDR}
\end{table}

The calculation of the conditional expectation \eqref{Gdef} is not as simple as for the full polynomial basis in \eqref{GfullPB}. Instead we need to compute the unconditional moments \eqref{eqnCExp}, which here reduce to
\begin{equation}\label{eqnCExpLDR}
 \textstyle G_{A,t,i}(x_1,\dots,x_t)=\E^\Q [g_i( \sum_{s=1}^t A_s^\top x_s +\sum_{s=t+1}^T A_s^\top X_s)],
\end{equation}
for $A\in V_p(\R^{dT})$. Evaluation of \eqref{eqnCExpLDR} boils down to compute multivariate moments of the $\R^p$-valued random variable $Y= \sum_{s=1}^t A_s^\top x_s +\sum_{s=t+1}^T A_s^\top X_s$. Thereto we utilize \cite[Lemma 1]{kan_08}, which relates multivariate moments to univariate moments, generalizing $4y_1 y_2=(y_1+y_2)^2 - (y_1-y_2)^2$, by
\begin{equation}\label{kaneq}
 \textstyle   y^{\bm \alpha}\equiv y_1^{\alpha_1}\cdots y_p^{\alpha_p} =  \frac{1}{|\bm\alpha| !}  \sum_{\bm\nu =\bm 0}^{\bm \alpha}  (-1)^{|\bm \nu|}  {\alpha_1\choose \nu_1}\cdots {\alpha_p\choose \nu_p}  ( h_{\bm\alpha,\bm\nu}^\top  y)^{|\bm \alpha|}   ,
\end{equation}
for vectors $h_{\bm\alpha,\bm\nu}=(\alpha_1/2-\nu_1,\dots,\alpha_p/2-\nu_p)^\top$. The sum in \eqref{kaneq} has in effect $(\alpha_1+1)\cdots(\alpha_p+1)/2$ terms. For $\delta=3$, this amounts to maximal $2^3/2=4$ terms.  As a result, the evaluation of \eqref{eqnCExpLDR}, that is, the computation of $\E^\Q [Y^{\bm\alpha}]$, reduces to the calculation of the $|\bm\alpha|$th moments of the scalar random variables $h_{\bm\alpha,\bm\nu}^\top Y$ in \eqref{kaneq}, which are given in closed form for various distributions of $X$.

\begin{example}
For the multinormal case $\vect(X)\sim N(0,I_{dT})$ we have \[\textstyle h_{\bm\alpha,\bm\nu}^\top Y\sim N(\sum_{s=1}^t h_{\bm\alpha,\bm\nu}^\top A_s^\top x_s  , \sum_{s=t+1}^T  \|A_s h_{\bm\alpha,\bm\nu}\|^2).\] These univariate moments are given in closed form, as explicitly stated in \cite[Proposition 2]{kan_08}.

\end{example}

\subsection{Shallow neural network}\label{sec:nn_part1}
In this third specification, we consider a shallow neural network with the rectified linear unit (\say{ReLU}) activation function. More specifically, we let $p=m$ and $g_i(y)=y_i^+$, for $i=1,\dots,m$. This yields the feature maps \eqref{phigAb} of the form
\[ \phi_{(A,b),i}(x)\equiv \phi_{(a_i,b_i)}(x) = (a_i^\top x + b_i)^+ \]
for weight matrices $A=(a_1,\dots,a_m)\in\R^{dT\times m}$ and bias vectors $b\in\R^m$. Henceforth we assume that \eqref{assdeltaint} holds for $\delta=1$, so that $\phi_{(a_i,b_i)}\in L^2_\Q$. By the positive homogeneity of the components of the feature map in the parameter, $\phi_{(\lambda a_i,\lambda b_i)} = \lambda\phi_{(a_i,b_i)}$ for all $\lambda> 0$, we can assume that $ (a_i,b_i)$ lies in the \say{unit sphere} $\Scal_{dT}$ in $\R^{dT+1}$ . Hence the parameter set can be chosen as the compact product manifold $\Theta = (\Scal_{dT})^m$ without loss of generality.

What about linear independence of $\phi_{(a_i,b_i)}$? Here is a fundamental result, which seems to be little known in the literature.

\begin{theorem}\label{thmReLUspanNew}
For any $(a_i,b_i),(\tilde a_i,\tilde b_i)\in \Scal_{dT}$, $i=1,\dots,m$, the following statements hold:
\begin{enumerate}
  \item\label{thmReLUspanNew1} If
  \begin{equation}\label{thmReLUspanNew1eq}
   \spn\{ \phi_{(a_1,b_1)},\dots,\phi_{(a_m,b_m)}\}=\spn\{ \phi_{(\tilde a_1,\tilde b_1)},\dots,\phi_{(\tilde a_m,\tilde b_m)}\}
  \end{equation}
  then $\{\pm (a_1,b_1),\dots, \pm (a_m,b_m)\}=\{\pm (\tilde a_1,\tilde b_1),\dots, \pm (\tilde a_m,\tilde b_m)\}$.

\item\label{thmReLUspanNew2} If $(a_i,b_i)\neq \pm (a_j,b_j)$ for all $i\neq j$ then
\begin{equation}\label{thmReLUspanNew2eq}
  \text{$\{ \phi_{(a_1,b_1)},\dots,\phi_{(a_m,b_m)}\}$ is linearly independent.}
\end{equation}

\end{enumerate}

\end{theorem}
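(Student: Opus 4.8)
My plan is to exploit the only non-smooth feature of a ReLU ridge function, namely its \emph{kink}. For $(a,b)\in S_{dT}$ write $\ell_{a,b}(x)=a^\top x+b$ and $H_{a,b}=\{x\in\R^{dT}\mid \ell_{a,b}(x)=0\}$. When $a\neq0$, the function $\phi_{(a,b)}=\ell_{a,b}^+$ is affine on each of the two open half-spaces cut out by $H_{a,b}$ (identically $0$ on one, equal to $\ell_{a,b}$ on the other) and is \emph{not} affine on any neighbourhood of a point of $H_{a,b}$; equivalently, its gradient $a\,\mathbf 1\{\ell_{a,b}>0\}$ jumps by exactly $a\neq0$ across $H_{a,b}$. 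A short computation shows that $\phi_{(-a,-b)}$ has the \emph{same} kink locus $H_{a,b}$ and the same gradient jump $+a$, consistent with $\phi_{(a,b)}-\phi_{(-a,-b)}=\ell_{a,b}$ being affine. The dictionary I will use is that, for $a\neq0$, the hyperplane $H_{a,b}$ determines and is determined by the sign class $\pm(a,b)$ (its unit normal up to sign). Since the $\phi$'s are continuous and $\Q$ has full support in the Gaussian setting of the paper (more generally it suffices that $\Q$ charges every nonempty open set), any $L^2_\Q$ identity among them lifts to a pointwise identity on $\R^{dT}$, so I may argue pointwise throughout.

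For statement \eqref{thmReLUspanNew2eq}, suppose $\sum_{i=1}^m c_i\,\phi_{(a_i,b_i)}=0$. Because the $(a_i,b_i)$ are pairwise distinct up to sign, the hyperplanes $\{H_{a_i,b_i}\mid a_i\neq0\}$ are pairwise distinct, hence each meets the others in sets of dimension at most $dT-2$. Fix $j$ with $a_j\neq0$ and choose $x_0\in H_{a_j,b_j}$ lying on no other $H_{a_i,b_i}$; such $x_0$ exists since finitely many lower-dimensional slices cannot cover the $(dT-1)$-dimensional hyperplane $H_{a_j,b_j}$. On a small ball around $x_0$ every term with $i\neq j$ is affine, as is any constant term arising from some $a_i=0$, so the relation forces $c_j\phi_{(a_j,b_j)}$ to be affine near $x_0$; by the kink property this is impossible unless $c_j=0$. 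Running this over all $j$ with $a_j\neq0$ annihilates those coefficients, after which the relation collapses to $c\,\mathbf 1=0$ for the at most one surviving constant neuron (the only unit directions with $a=0$ are $\pm(0,1)$, and $(0,1)$ yields $\mathbf 1$), giving $c=0$. Hence linear independence.

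For statement \eqref{thmReLUspanNew1}, assume the two spans coincide and call the common space $W$. Take any left neuron with $a_i\neq0$. I claim $H_{a_i,b_i}$ coincides with one of the right hyperplanes $H_{\tilde a_j,\tilde b_j}$. If not, pick $x_0\in H_{a_i,b_i}$ avoiding every $H_{\tilde a_j,\tilde b_j}$ (again possible, these being finitely many hyperplanes distinct from $H_{a_i,b_i}$): near $x_0$ each $\phi_{(\tilde a_j,\tilde b_j)}$ is affine, so every element of $W$, and in particular $\phi_{(a_i,b_i)}$ itself, would be affine near $x_0$, contradicting its kink. Therefore $\pm(a_i,b_i)=\pm(\tilde a_j,\tilde b_j)$ for some $j$. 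By symmetry, the sets of sign classes with $a\neq0$ on the two sides coincide, which is the content of \eqref{thmReLUspanNew1} away from the constant direction.

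The clean part is thus the kink argument, which simultaneously yields \eqref{thmReLUspanNew2eq} and the $a\neq0$ portion of \eqref{thmReLUspanNew1}. I expect the genuine obstacle to be the bookkeeping for the degenerate direction $a=0$. First, $(0,-1)\in S_{dT}$ gives $\phi_{(0,-1)}=0$, a dead neuron that falsifies both statements verbatim, so it must be excluded (equivalently, one assumes each $\phi_{(a_i,b_i)}\neq0$); this is the one place the statement as written needs a caveat. Second, even after excluding it, matching the constant class $\pm(0,1)$ is delicate, because a constant can be \emph{synthesised} from genuinely non-constant ReLUs -- e.g.\ in dimension one $x^+-(-x)^+-(x-1)^++(1-x)^+\equiv 1$, a relation that persists after rescaling each summand onto $S_{dT}$ -- so the mere presence of $\mathbf 1$ in $W$ does not by itself reveal whether the constant neuron was used. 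To settle this I would split every piecewise-affine function into its affine part and its ``pure-kink'' part via $\phi_{(a,b)}=\tfrac12|\ell_{a,b}|+\tfrac12\ell_{a,b}$, pass to the quotient of piecewise-affine functions modulo affine functions, note that the kinks $\{|\ell_{a,b}|\}$ over distinct hyperplanes remain independent there (again by the kink principle), and finally analyse $W\cap\{\text{affine}\}$, using the linear growth of non-constant ReLUs against the boundedness of $\mathbf 1$, to decide the constant neuron's presence. This last step is routine but fiddly, and is where the argument must be made airtight.
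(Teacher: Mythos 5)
Your core argument is the same as the paper's. The paper formalizes the ``kink'' via the Bouligand directional derivative: for $h=\sum_j c_j\phi_{(a_j,b_j)}$ the discontinuity set of $x\mapsto\nabla_v h(x)$ is contained in $\cup_j H_{(a_j,b_j)}$, where $H_{(a,b)}=\{x\mid a^\top x+b=0\}$, while for a single neuron with $v^\top a_i\neq 0$ it equals $H_{(a_i,b_i)}$ exactly; comparing these sets forces $\pm(a_i,b_i)=\pm(\tilde a_j,\tilde b_j)$. That is precisely your ``pick a point of $H_{(a_i,b_i)}$ avoiding the finitely many other hyperplanes, where everything else is locally affine'' argument. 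Your direct proof of (ii), killing coefficients one hyperplane at a time, differs only cosmetically from the paper's, which deduces (ii) from (i) by contradiction; and your remark that $L^2_\Q$-identities must first be upgraded to pointwise identities (full support of $\Q$) addresses a point the paper passes over silently.

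Where you stall---the direction $a=0$---is not a defect of your method but of the statement, and the ``fiddly'' final step you defer cannot be completed: part (i) as written is false in exactly the constant-neuron case you worry about. Concretely, for $dT=1$ the identity $x^+-(-x)^+-(x-1)^++(1-x)^+\equiv 1$ that you quote gives, after normalizing the parameters onto $S_1$, $\spn\{\phi_{(1,0)},\phi_{(-1,0)},\phi_{(1,-1)/\sqrt2},\phi_{(-1,1)/\sqrt2}\}=\spn\{\phi_{(0,1)},\phi_{(1,0)},\phi_{(-1,0)},\phi_{(1,-1)/\sqrt2}\}$ (both equal the four-dimensional space spanned by $1$, $x^+$, $(-x)^+$, $(x-1)^+$), yet the two sign-class sets differ by $\pm(0,1)$; so the span genuinely does not determine whether the constant neuron was used, and no amount of analysing $W\cap\{\text{affine}\}$ can recover it. Likewise the dead neuron $(0,-1)$ falsifies (ii), as you note. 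The paper's own proof has the identical unacknowledged gap: its chain $H_{(a_i,b_i)}=D_{\nabla_v\phi_{(a_i,b_i)}}\subseteq\cup_j H_{(\tilde a_j,\tilde b_j)}$ yields $(a_i,b_i)=\pm(\tilde a_j,\tilde b_j)$ only when $H_{(a_i,b_i)}\neq\emptyset$, i.e.\ when $a_i\neq 0$. The correct repair is to add the hypothesis $a_i\neq 0$, $\tilde a_i\neq 0$ (or restrict the conclusion of (i) to sign classes with $a\neq 0$, as your own argument already delivers); under that restriction your proof is complete and matches the paper's.
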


Note that the converse implication in Theorem \ref{thmReLUspanNew}\ref{thmReLUspanNew1} is not true, as can easily be seen from the case where $m=1$ and $(\tilde a_1,\tilde b_1)=-(a_1,b_1)\in \Scal_{dT}$ with $a_1\neq 0$. What's more, the following example shows that the assumptions in Theorem \ref{thmReLUspanNew}\ref{thmReLUspanNew2} cannot be relaxed to pairwise inequality, $(a_i,b_i)\neq (a_j,b_j)$ for all $i\neq j$.

\begin{example}\label{exlinind}
Let $(a_1,b_1),\dots,(a_3,b_3)\in \Scal_{dT}$ be linearly dependent vectors such that $\sum_{i=1}^3 c_i (a_i,b_i)=0$ for some coefficients $c_i\neq 0$. Define $(a_{3+i},b_{3+i})=-(a_i,b_i)\in \Scal_{dT}$, $i=1,2,3$. Then $\{\phi_{(a_i,b_i)}\mid i=1,\dots,6\}$ is linearly dependent,
\[\textstyle\sum_{i=1}^3 c_i  \phi_{(a_i,b_i)} +\sum_{i=1}^3 (-c_i) \phi_{(a_{3+i},b_{3+i})}    = \sum_{i=1}^3 c_i (a_i^\top x+b_i) = 0,\]
while $(a_i,b_i)\neq (a_j,b_j)$ for all $i\neq j$.
\end{example}

We conclude that the assumptions of the existence Lemma \ref{lemexi} are not met. Indeed, we have the following non-existence result, which contrasts somewhat surprisingly with the widespread use of ReLU neural networks in machine learning.

\begin{theorem}\label{thmNEReLU}
For the shallow ReLU neural network, there exists no minimizer of \eqref{optthebet} in general. Moreover, uniqueness does not hold, in the sense that the optimal subspace $\spn\{\phi_{(a_1,b_1)},\dots,\phi_{(a_m,b_m)}\}$ is not unique, in general.
\end{theorem}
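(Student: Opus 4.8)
The plan is to establish both assertions by explicit one-dimensional counterexamples, taking $dT=1$ and $m=2$, which already exhibit the two phenomena; higher dimensions and larger $m$ reduce to these by acting on a single coordinate and leaving the remaining units inactive. Note first that Example \ref{exlinind} already shows that the linear-independence assumption \ref{lemexi3} of the existence Lemma \ref{lemexi} can fail for the ReLU family, so Lemma \ref{lemexi} simply does not apply here. This is only suggestive, however: to conclude genuine non-existence one must produce a concrete target whose approximation error is driven down to its infimum \emph{only} along a degenerating sequence of parameters.

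For non-existence, I would take $\Q$ with a density that is positive in a neighborhood of $0$ (e.g.\ standard normal, so that $\E_\Q[X^2]<\infty$ and assumption \eqref{assdeltaint} holds for $\delta=1$) and the discontinuous target $f=\mathbf 1_{\{x\ge 0\}}\in L^2_\Q$. First I would show that the infimum in \eqref{optthebet} equals $0$ by a difference-quotient (discrete-derivative) construction: with $(a_1,b_1)=(1,0)$, $(a_2,b_2)=(1,s)/\sqrt{1+s^2}\in S_1$, and coefficients $\beta_1=-1/s$, $\beta_2=\sqrt{1+s^2}/s$, the combination equals the continuous piecewise-linear ramp $\psi_s(x)=\big((x+s)^+-x^+\big)/s$, and $\psi_s\to \mathbf 1_{\{x\ge 0\}}$ in $L^2_\Q$ as $s\downarrow 0$ since the two functions differ only on $[-s,0]$, a set of vanishing $\Q$-mass. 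Second, I would show the infimum is not attained: for every fixed $(a_i,b_i)\in S_1$ each $\phi_{(a_i,b_i)}$ is continuous, hence so is every $h\in\spn\{\phi_{(a_1,b_1)},\phi_{(a_2,b_2)}\}$; since $\mathbf 1_{\{x\ge 0\}}$ has a genuine jump at $0$ and $\Q$ charges every neighborhood of $0$, no continuous $h$ can coincide with it $\Q$-a.e., so $\|f-h\|_{L^2_\Q}>0$ for every admissible $(\theta,\beta)$. Together these give an unattained infimum of $0$, i.e.\ no minimizer exists.

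For non-uniqueness of the optimal subspace, I would exhibit a case where minimizers \emph{do} exist but distinct optimal subspaces occur, exploiting that a ReLU unit vanishes identically on a half-space. Take $\Q$ supported in $\{x>0\}$ (e.g.\ uniform on $[1,2]$) and $f(x)=x$, again with $m=2$. With $(a_1,b_1)=(1,0)$ and $(a_2,b_2)=(-1,0)$ one has $\phi_{(1,0)}(x)=x$ and $\phi_{(-1,0)}(x)=(-x)^+\equiv 0$ on the support, so the optimal subspace is $\spn\{x\}$ with error $0$; with $(a_1,b_1)=(1,0)$ and $(a_2,b_2)=(0,1)$ one has $\phi_{(0,1)}\equiv 1$, so the optimal subspace is $\spn\{1,x\}$, again with error $0$. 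Both configurations attain the minimal error $0$, yet $\spn\{x\}\neq\spn\{1,x\}$, so the optimal subspace is not unique.

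The main obstacle is the non-existence argument, and within it the simultaneous control of the two competing features: proving the infimum is exactly $0$ requires the degenerating difference-quotient family (closely tied to the $\pm$-collision phenomenon characterized in Theorem \ref{thmReLUspanNew}), while proving it is never attained rests on the qualitative fact that every admissible span consists entirely of continuous functions, so a target with a true discontinuity is approachable but not representable. The non-uniqueness claim is then comparatively routine, relying only on the identical vanishing of a ReLU on the complementary half-space.
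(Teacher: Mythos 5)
Your argument is correct, and it splits naturally into a half that coincides with the paper's and a half that does not. For non-existence, your construction is essentially the paper's own: the authors also take $d=T=1$, $\Q=N(0,1)$, $f=1_{[0,\infty)}$, $m=2$, and drive the error to zero along a degenerating difference quotient of two ReLU units whose parameters collide on the sphere ($b_n=1/n$, $a_n=\sqrt{1-b_n^2}$ with coefficients $1/b_n$ and $-a_n/b_n$, which is your ramp $((x+s)^+-x^+)/s$ up to normalization); your explicit non-attainment step (every element of the span is continuous, while no continuous function can agree $\Q$-a.e.\ with a jump when $\Q$ charges every neighborhood of $0$) is left implicit in the paper, so spelling it out is a small improvement rather than a deviation. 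For non-uniqueness, however, you take a genuinely different route: the paper reuses the abstract symmetry argument from Theorem \ref{thmpolyuniqueN} --- if $f(x)=f(Vx)$ and $V_\ast\Q=\Q$ for an orthogonal $V$, the objective is invariant under $A\mapsto V^\top A$ while the span changes by Theorem \ref{thmReLUspanNew} --- whereas you exhibit a concrete measure ($\Q$ uniform on $[1,2]$) and target ($f(x)=x$) for which the minimum $0$ is actually \emph{attained} by two configurations with distinct spans, $\spn\{x\}$ versus $\spn\{1,x\}$, exploiting that a ReLU unit can vanish identically on the support of $\Q$. The paper's symmetry argument is more generic (it shows non-uniqueness is pervasive whenever $f$ and $\Q$ share a symmetry, without needing attainment), while yours has the virtue of being fully explicit and of demonstrating non-uniqueness in a regime where a minimizer genuinely exists, which is arguably a cleaner reading of the claim ``the optimal subspace is not unique.'' Both are valid proofs of the stated theorem.
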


\begin{remark}\label{remNN}
The proof of the non-existence statement in Theorem \ref{thmNEReLU} is by means of a counterexample. It shows that the space of functions represented by the finite shallow neural network $\{ \phi_\theta^\top\beta\mid (\theta,\beta)\in\Theta\times \R^m\}$ is not closed in $L^2_\Q$ in general. This result also holds for finite neural networks with several layers, as recently shown in \cite{pet_ras_voi_21}. Note that this finding is consistent with the celebrated universal approximation property of neural networks \cite{hor_sti_whi_89}, which holds asymptotically for $m\to\infty$. Indeed, in our case, $m$ is fixed and finite. In view of their spectacular performance in solving practical problems, neural networks have become the subject of intensive research. We refer to the literature overview given in \cite{pet_ras_voi_21}. An important strand of research is focused on the training of neural networks. Despite the non-convexity of the objective function \eqref{optthebet} in $(\theta,\beta)$, researchers have shown that its empirical landscape does not exhibit non-global local minima with high probability if the problem is sufficiently overparametrized. That is, if $m$ is sufficiently large compared to the sample size, see \cite{ven_ban_bru_19}. Also we exploit these good empirical properties of neural networks in the numerical case studies below. Another line of research focuses directly on financial payoffs. \cite{bos_car_pap_21} show that an infinitely large shallow neural network is capable of uniquely replicating any payoff function, and they find the representation in terms of the spectral decomposition of a Volterra integral equation.
\end{remark}

The conditional expectation functions \eqref{eqnCExp} read here as
\begin{equation}\label{eqnCExpReLU}
 \textstyle G_{(a_i,b_i),t}(x_1,\dots,x_t)=\E^\Q [( b_i+ \sum_{s=1}^t a_{i,s}^\top x_s +\sum_{s=t+1}^T a_{i,s}^\top X_s)^+],
\end{equation}
where we decompose every column vector $a_i$ of $A$ into $T$ consecutive blocks $a_{i,t}$ such that $a_i^\top = (a_{i,1}^\top,\dots,a_{i,T}^\top)$.  Evaluation of \eqref{eqnCExpReLU} boils down to compute $\E^\Q [Y^+]$ for the scalar random variable $Y=b_i+ \sum_{s=1}^t a_{i,s}^\top x_s +\sum_{s=t+1}^T a_{i,s}^\top X_s$, which is given in closed form for various distributions of $X$.

\begin{example}
 For the multinormal case $\vect(X)\sim N(0,I_{dT})$, we have
\begin{equation}\label{YReLU}
 \textstyle Y\sim N(b_i+ \sum_{s=1}^t a_{i,s}^\top x_s , \sum_{s=t+1}^T \| a_{i,s}\|^2).
\end{equation}
We then obtain a closed form expression for \eqref{eqnCExpReLU} by combining \eqref{YReLU} with the well known Bachelier's call option price formula $\E^\Q[Z^+] = \mu \Phi(\mu/\sigma) + \sigma \Phi'(\mu/\sigma)$, for a normal distributed random variable $Z\sim N(\mu,\sigma^2)$, where $\Phi$ denotes the standard normal distribution function and $\Phi'$ its density function, see, e.g., \cite[Section 4.3]{del_sch_06} or \cite{fernandez2020}.
\end{example}

In other cases, where the extended Fourier transforms $\widehat\Q_t(u) = \E_\Q[\e^{u^\top X_t}]$ of the marginal distributions $\Q_t$ of $X_t$ are given in closed form, for a suitable domain of complex-vector valued arguments $u$, we can utilize Fourier transform analysis. Indeed, for any constant $w>0$, we have the identity
\[ \textstyle y^+ = \frac{1}{2\pi}\int_\R \e^{(w+\im\lambda)y} \frac{1}{(w+\im\lambda)^2} \,d\lambda. \]
Hence the evaluation of \eqref{eqnCExpReLU} reduces to the computation of the line integral
\begin{equation}\label{eqnCExpReLUFourier}
  \textstyle G_{(a_i,b_i),t}(x_1,\dots,x_t)=  \frac{1}{2\pi}\int_\R \widehat F_Y(w+\im\lambda) \frac{1}{(w+\im\lambda)^2} \, d\lambda ,
\end{equation}
where $\widehat F_Y(w+\im\lambda)= \E_\Q[\e^{(w+\im\lambda) Y}] =  \e^{(w+\im\lambda)( b_i+ \sum_{s=1}^t a_{i,s}^\top x_s)} \prod_{s=t+1}^T \widehat\Q_s((w+\im\lambda) a_{i,s})$ is in closed form. Note that Fourier type integrals like the one in \eqref{eqnCExpReLUFourier} are routinely computed in finance applications, e.g, in L\'evy type or affine models, \cite{duf_fil_sch_03}. So one can draw on existing libraries of computer code.


\subsection{Finite-sample estimation} \label{sec:fin_sample_approx}

While surprising and remarkable, the non-uniqueness and non-existence results in Theorems \ref{thmpolyuniqueN} and \ref{thmNEReLU} for polynomial feature maps with dimensionality reduction and shallow ReLU neural networks, respectively, are mainly of theoretical interest, arguably. See also Remark~\ref{remNN}. In practice, we solve \eqref{optthebet} numerically using some quasi-Newton algorithm, which finds local minima that serve as approximate solutions. Thereto, we replace the model population measure $\Q$ by the empirical measure $\widehat\Q=\frac{1}{n}\sum_{i=1}^n \delta_{x^{(i)}}$ based on a training sample $ x^{(1)},\dots,x^{(n)}$ drawn from $\Q$, along with the corresponding function values $y_i=f(x^{(i)})$.

For the full polynomial basis, problem \eqref{optthebet} boils down to the projection \eqref{eqnbetaAnew}, and we obtain the optimal
\begin{equation}\label{hatbetaPhi}
 \widehat\beta =  \textstyle(\frac{1}{n}\Phi^\top\Phi)^{-1} (\frac{1}{n} \Phi^\top y)
\end{equation}
where we define $\Phi\in\R^{n\times m}$ by $\Phi_{ij}=\phi_{j}(x^{(i)})$. This empirical estimator is consistent. The law of large numbers implies that $\widehat\beta$ converges in probability to the optimal $\beta$ in \eqref{eqnbetaAnew} for the model population measure $\Q$, as the sample size $n\to\infty$. Moreover, the central limit theorem holds and theoretical guarantees for the sample error can be established, see, e.g., \cite{bou_fil_21}.

For the polynomial feature map with dimensionality reduction, we use the Riemannian BFGS algorithm \cite{huang2015broyden} to find a local minimizer of \eqref{optthebet} over the Riemannian manifold $V_p(\R^{dT})\times\R^m$. 

For the shallow ReLU neural network, we use the BFGS algorithm in the {\it Scikit-learn} library \cite{scikit-learn} for the Python programming language to find a local minimizer of \eqref{optthebet} over the full parameter set $\R^{dT\times m}\times\R^m\times \R^m$. 

Given the lack of uniqueness, and even existence, for the polynomial feature map with dimensionality reduction, and the shallow ReLU neural network, it remains an open research question whether asymptotic consistency holds and theoretical guarantees can be established for these specifications.

\section{European call option example revisited} \label{sec:call_example}

Having presented the theoretical background, we now turn back to our illustrating preliminary example in Section \ref{sec:call_nmc_new}. We apply the functional bases described in Sections \ref{sec:full_pol_base_part1}--\ref{sec:nn_part1}, following the steps outlined in Section \ref{sec:fin_sample_approx}. For each functional basis we show the same quality metrics as in Section~\ref{sec:call_nmc_new} in order to compare to the results from nested Monte Carlo estimation.

Additionally, each functional basis is also compared to other related methods, such as regress-now LSMC. When comparing between regression-based methods, we use an additional quality metric: the mean $L_1$ error over the empirical distribution of $\Delta V_1$. This metric allows us to make comparisons of the goodness-of-fit along the entire distribution, not only the tails. For more details about the quality metrics use, we refer to Appendix~\ref{sec:qual_metrics}.

\subsection{Results}

The first comparison uses the full polynomial basis described in Section~\ref{sec:full_pol_base_part1}. In Tables \ref{tab:call_mae_pv_all_methods} and \ref{tab:call_mae_es_all_methods} we present the MApE comparison among nested Monte Carlo (nMC), regress-now polynomial basis and the replicating martingale (regress-later) full polynomial basis. We can see how the replicating martingale outperforms the other two methods in the estimation of the present value and the 99\% expected shortfall. For a more comprehensive comparison, we look at the mean $L_1$ error in Table~\ref{tab:call_l1_error_all_methods}. We can see that Table~\ref{tab:call_mae_es_all_methods} confirms the conclusions from Table~\ref{tab:call_l1_error_all_methods}, namely that the replicating martingale estimators outperform the regress-now estimators. In this regard, we verify what others in the literature have reported before for regress-later estimators.

\begin{table}[ht]
    \centering
    \small
    \caption{European call, comparison of present value MApE (in percentage points)}

\begin{tabular}{@{}rrrrrrrr@{}}
\toprule
\multicolumn{2}{r}{\textbf{}} & \multicolumn{2}{r}{\textbf{Full Polynomial basis}} &  \textbf{LDR} & \multicolumn{2}{r}{\textbf{Neural Network}} \\
\textbf{Samples} & \textbf{nMC} &                    Regress-now & Regress-later & Regress-later &             Regress-now & Regress-later \\
\midrule
\multicolumn{1}{l}{\textbf{\textbf{\textbf{T}}=5}}  &   &   &   &   &   &   \\
           1,000 &          6.5 &                            4.0 &           1.3 &           0.5 &                     4.2 &           0.2 \\
           5,000 &          4.0 &                            1.8 &           0.2 &           0.2 &                     1.8 &           0.1 \\
          10,000 &          3.8 &                            1.2 &           0.1 &           0.1 &                     1.2 &           0.1 \\
          50,000 &          1.7 &                            0.6 &           0.1 &           0.1 &                     0.6 &           \textless{}0.1 \\
\multicolumn{1}{l}{\textbf{\textbf{\textbf{T}}=40}}  &   &   &   &   &   &   \\
           1,000 &          7.5 &                            7.2 &             &           3.8 &                     7.9 &           5.5 \\
           5,000 &          3.9 &                            3.2 &             &           2.1 &                     3.2 &           1.7 \\
          10,000 &          3.1 &                            2.3 &             &           1.4 &                     2.4 &           0.9 \\
          50,000 &          2.4 &                            1.0 &             &           0.5 &                     1.0 &           0.2 \\
\bottomrule
\end{tabular}

    \label{tab:call_mae_pv_all_methods}
\end{table}

\begin{table}[ht]
    \centering
    \small
    \caption{European call, comparison of expected shortfall MApE (in percentage points)}

\begin{tabular}{@{}rrrrrrrr@{}}
\toprule
\multicolumn{2}{r}{\textbf{}} & \multicolumn{2}{r}{\textbf{Full Polynomial basis}} &  \textbf{LDR} & \multicolumn{2}{r}{\textbf{Neural Network}} \\
\textbf{Samples} & \textbf{nMC} &                    Regress-now & Regress-later & Regress-later &             Regress-now & Regress-later \\
\midrule
\multicolumn{1}{l}{\textbf{\textbf{\textbf{T}}=5}}  &   &   &   &   &   &   \\
           1,000 &         26.7 &                           21.5 &           4.8 &           2.0 &                    47.5 &           0.9 \\
           5,000 &         15.9 &                            9.3 &           1.1 &           0.9 &                    11.5 &           0.2 \\
          10,000 &         14.5 &                            6.5 &           0.8 &           0.7 &                     7.3 &           0.1 \\
          50,000 &          7.9 &                            2.9 &           0.6 &           0.5 &                     3.6 &          \textless{}0.1 \\
\multicolumn{1}{l}{\textbf{\textbf{\textbf{T}}=40}}  &   &   &   &   &   &   \\
           1,000 &        403.6 &                          141.0 &             &          10.0 &                   459.0 &          16.0 \\
           5,000 &        110.9 &                           46.8 &             &           6.5 &                   106.9 &           4.9 \\
          10,000 &         56.6 &                           29.6 &             &           4.9 &                    57.3 &           3.3 \\
          50,000 &         19.1 &                           12.5 &             &           2.9 &                    11.4 &           1.0 \\
\bottomrule
\end{tabular}

    \label{tab:call_mae_es_all_methods}
\end{table}

\begin{table}[ht]
    \centering
    \small
    \caption{European call, comparison of relative mean $L_1$ error (in percentage points)}

\begin{tabular}{@{}rrrrrrrr@{}}
\toprule
       \textbf{} & \multicolumn{2}{r}{\textbf{Full Polynomial basis}} &  \textbf{LDR} & \multicolumn{2}{r}{\textbf{Neural Network}} \\
\textbf{Samples} &                    Regress-now & Regress-later & Regress-later &             Regress-now & Regress-later \\
\midrule
\multicolumn{1}{l}{\textbf{\textbf{\textbf{T}}=5}}  &   &   &   &   &   \\
           1,000 &                           15.5 &           4.6 &           1.1 &                    36.6 &           0.5 \\
           5,000 &                            6.8 &           0.9 &           0.6 &                    12.5 &           0.4 \\
          10,000 &                            4.8 &           0.7 &           0.6 &                     8.0 &           0.4 \\
          50,000 &                            2.2 &           0.5 &           0.5 &                     2.9 &           0.4 \\
\multicolumn{1}{l}{\textbf{\textbf{\textbf{T}}=40}}  &   &   &   &   &   \\
           1,000 &                           26.5 &             &           4.6 &                    68.8 &           7.4 \\
           5,000 &                           11.9 &             &           2.6 &                    24.0 &           2.2 \\
          10,000 &                            8.5 &             &           2.0 &                    14.9 &           1.4 \\
          50,000 &                            3.9 &             &           1.3 &                     4.1 &           0.8 \\
\bottomrule
\end{tabular}

    \label{tab:call_l1_error_all_methods}
\end{table}

Tables \ref{tab:call_mae_pv_all_methods} and \ref{tab:call_mae_es_all_methods} do not show results for the full polynomial basis under the replicating martingale approach for $T=40$. The reason for this is the combinatorial explosion in the number of basis functions as the dimensionality of the problem grows. As shown in Table~\ref{tabdimPol}, the number of basis functions for $d=3, \delta=3, T=40$ is $302,621$. The number of samples would have to be at least of that magnitude, yielding a problem that, while feasible for some algorithms, is not necessarily practical in the real world. Our focus is to describe a method that shows good quality even for high-dimensional problem with a manageable number of samples.
It is important to note that the regress-now approach does not suffer from this problem and shows a quality improvement over the Monte Carlo approach.

The problems with high dimensional cases described in the previous paragraph motivate our use of linear dimensionality reduction (LDR), as described in Section~\ref{sec:ldr_part1}. Table \ref{tabdimPolLDR} shows that the number of parameters to be estimated can be greatly reduced, from 816 to 29 for $T=5$ and from 302,621 to 134 for $T=40$. Whether a function $f$ can be well approximated by a polynomial basis with linear dimensionality reduction for a small $\delta$ and $p$ depends on the nature of the function. Asymptotically, any function in $L^2_{\Q}$ can be approximated with arbitrary precision.

The optimization problem in \eqref{optthebet} is solved over the product manifold $V_k(\R^{dT}) \times \R^m$ rather over $\R^{dT \times k } \times \R^m$, which is supported by Theorem \ref{thmVk}. This reduces the effective dimensionality of the problem and simplifies the calculation of the conditional expectation of $\phi(X)$ in Equation~\eqref{eq:rep_mart}. For these examples we have used the Riemannian BFGS algorithm from \cite{huang2015broyden} using the C++ library published by the authors. Additionally, we tested another two algorithms, Riemannian Trust Regions (\cite{absil2007trust}) as implemented by the Python library \say{pymanopt} (\cite{townsend2016pymanopt}), and Grassmann Gauss-Newton (\cite{hokanson2018data}) as implemented by the authors in the publicly available Python library. In all cases, Riemannian BFGS achieved better results.

For this example, we have chosen $p=3$. We performed a sensitivity analysis on this parameter and found that a larger value might lead to better results in some cases but not in all cases. The full results are included in Appendix~\ref{subsec:sensitivity_ldr}. In summary, $p$ is an important hyper-parameter but it is not special in any way, which means that any robust method for selecting hyper-parameters---cross-validation, pilot simulation, sensitivity analysis---can be used. In Appendix~\ref{subsec:sensitivity_ldr}, we also provide a sensitivity analysis to the starting point of the optimization. Since BFGS is quasi-Newton method, it is not guaranteed to find a global minimum in a general case. In the case of the polynomial basis with LDR, we find that the selection of the starting point makes a big difference in the final result.

Tables \ref{tab:call_mae_pv_all_methods} and \ref{tab:call_mae_es_all_methods} show the results of applying LDR ($p=3$) to the European call problem and we can see how this method performs in relation to the other alternatives. We can see that the polynomial LDR has lower error than nested Monte Carlo and both regress-now and regress-later polynomials. It also becomes clear that the LDR approach allows high dimensional problems where the regress-later approach on a full polynomial basis would fail due to producing a very large number of basis functions, which leads to a computational problem due to time or memory constraints.

We next describe the results obtained using a neural network model as defined in Section~\ref{sec:nn_part1}. For this example we have chosen to work with $m=101$ nodes, whereof one is a bias term only, say $A_{101}=0$. The total number of parameters is thus $1{,}600+101=1{,}701$ for $T=5$, and $12{,}100+101=12{,}201$ for $T=40$. This choice was made via cross-validation.

The neural network was optimized via backpropagation using the BFGS algorithm from Python's popular library scikit-learn (\cite{scikit-learn}).  The results in Tables \ref{tab:call_mae_pv_all_methods} and \ref{tab:call_mae_es_all_methods} show an excellent quality of the neural network replicating martingale in this example, outperforming every other choice, except for the risk calculations with a very low number of samples (1,000). Also here we can see that the replicating martingale method (regress-later) outperforms the regress-now variation, the same way that it did for the polynomials.

In Figure~\ref{fig:box_plot_call_es} we can compare the empirical distribution of the errors for each method. This figure makes it easy to qualitatively assess the differences between the different methods, for example: the high variance of nested Monte Carlo vis-a-vis the lower variance of replicating martingales, or the higher accuracy of regress-later methods compared to regress-now methods. We can also see that, despite the non-linear optimization with random starting points involved, the neural network replicating martingale does not have qualitatively higher variance than the polynomial equivalents.

\begin{figure}[ht]
\includegraphics[width=\textwidth]{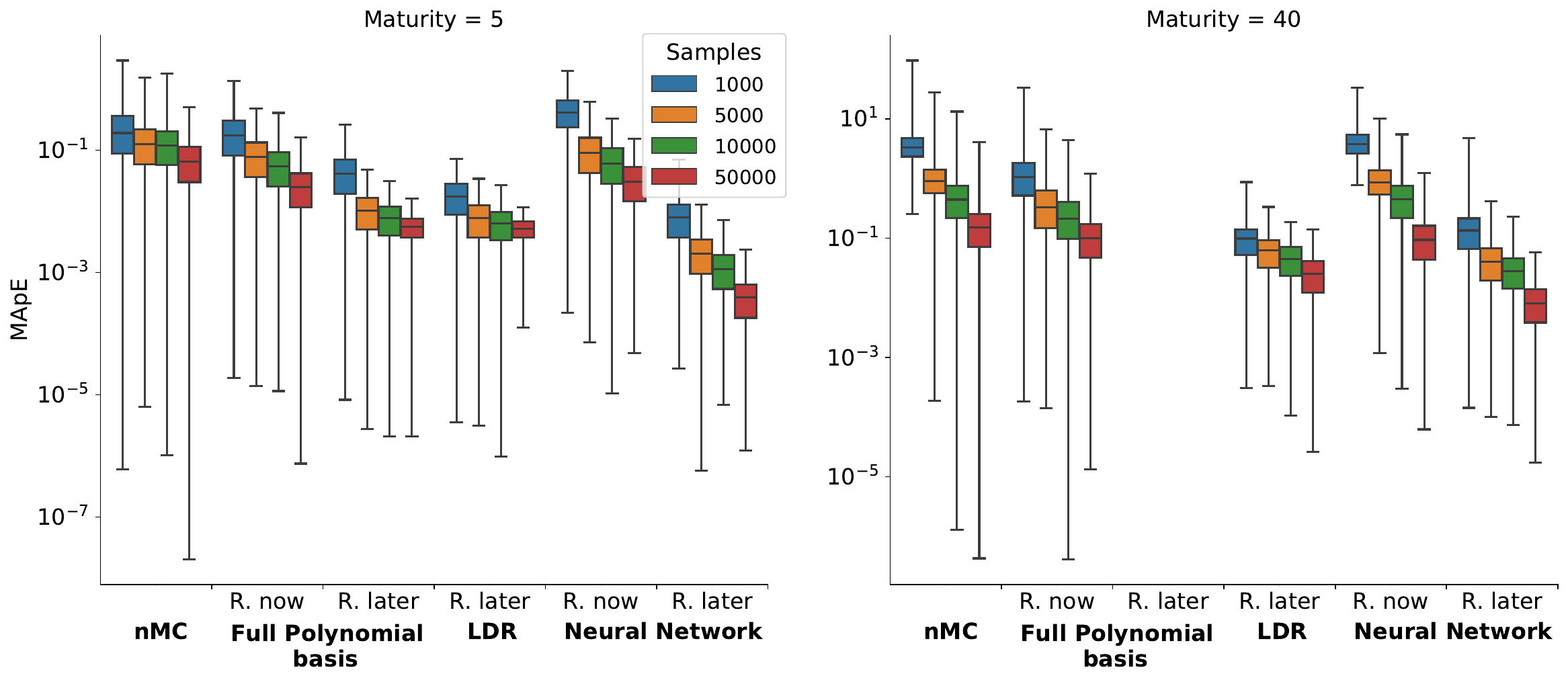}
\caption{Distribution of expected shortfall estimates per method for the European call example. Boxes show the upper and lower quartiles of the empirical distributions, while whiskers show their maxima and minima. Due to the logarithmic scale, a visual artifact is introduced by which the inter-quartile range seems to increase with larger sample sizes---most notably in the nested Monte Carlo (nMC) for Maturity=40. In reality, the inter-quartile range decreases with larger sample sizes, but the logarithmic scale makes the box bigger for smaller errors.}
\label{fig:box_plot_call_es}
\end{figure}

It is interesting to consider the structure of the neural network and polynomial models, to understand what they have in common and what they do not. As seen in \eqref{phigAb}, both methods use a linear map to reduce the dimensionality of the input before applying a non-linear function. The polynomial model is based on global polynomials while the neural network can be seen as a data-driven piece-wise linear model. While usually piece-wise linear models require a grid to be defined a priori, neural networks adjust the bias term to \say{place} the grid where it is most needed according to the input data.

A comparison of the runtimes is given in Appendix~\ref{app_runtime}.

\section{Insurance liability model example}\label{sec:insurance}

Having shown the effectiveness of learning the replicating martingale in the case of a European call option, we present now a more complex example: a variable annuity guarantee. Unlike the previous example, this one features path dependent cash flows at multiple points in time and also a dependency on a stochastic mortality model, rather than only stochastic market variables. The model has been built using models commonly in use in the insurance industry. The policyholder population is fictitious. We first describe the model. Then we present the numerical results following the same structure as for the European call option example in Section~\ref{sec:call_example}.

\subsection{Model}

The insurance product being simulated is an investment account with a \say{return premium on death} guarantee. Every policyholder has an investment account. At each time period, the policyholders pay a premium, which is used to buy assets. These assets are deposited in the fund, divided into the different assets according to a fix asset allocation, which is the same as the initial one. The value of the fund is driven by the inflows from premiums and the market value changes, which are driven by the interest rate, equity, and real estate models. At each time step, a number of policyholders die---as determined by the stochastic life table---and the investment fund is paid out to the beneficiaries. If the investment fund were below the guaranteed amount, the company will additionally pay the difference between the fund value and the guaranteed amount. The guaranteed amount is the simple sum of all the premiums paid over the life of the policy. Over the course of the simulation the premiums paid gradually increase the guaranteed amount for each policy.

All policies have the same maturity date. In the short-term, low dimensional example, the maturity is $T=5$. In the long term, high dimensional example the maturity is $T=40$. At maturity, the higher of the fund value and the guaranteed amount is paid out to all survivors.

The investment portfolio holds four assets: a ten-year zero coupon bond, a twenty-year zero coupon bond, an equity index and a real estate index. The bonds are annually replaced such that the time to maturity remains constant.

The model is described by the following equations, where all financial variables are nominal amounts, unless otherwise stated. The discounted cash flow at $t=1,\dots,T$ is given by
\[ \zeta_t = \begin{cases} D_t\max(A_t, G_t)/C_t, & t < T \\
L_{T-1}\max(A_T, G_T)/C_T, & t = T \end{cases}
\]
where
\begin{multicols}{2}
\begin{itemize}
    \item[--] $D_t$: total dead in period $(t-1, t]$
    \item[--] $L_t$: total of policyholders alive at time t
    \item[--] $A_t$: value of assets at time $t$ (per policy)
     \item[--] $G_t$: guaranteed value at time $t$  (per policy)
    \item[--] $C_t$: value of the cash account at time $t$
    \item[--] $T$: maturity date of the policies
\end{itemize}
\end{multicols}

The value of assets at $t=0,\dots,T$ is given by
\[ A_t = \sum_{i=1}^4 U_t^{i}V_t^{i} =  \sum_{i=1}^4 U_{t-1}^{i}V_{t}^{i-} + P_t,  \]
where
\[ V_t^{i} = \begin{cases} B(t, t+10), & i=1, \\
B(t, t+20) ,& i=2, \\
{S}_t, & i=3, \\
RE_t, & i=4 ,
\end{cases}\qquad \text{and}\qquad V_t^{i-} = \begin{cases} B(t, t+9), & i=1 ,\\
B(t, t+19) ,& i=2, \\
V_t^{i}, & i=3,4,
\end{cases}\]
denotes the unit price of asset $i$ at time $t$, where we use the notation $V^{i-}_t$ to express the rolling over of the constant-maturity bond investments for $i=1,2$, and
\begin{multicols}{2}
\begin{itemize}
    \item[--] $U^i_t = (U_{t-1}^{i}V_{t}^{i-} + M_i P_t) /V^i_t$: number of units of asset $i$ held in period $(t, t + 1]$ (per policy), where $U^{i}_{-1}:=0$
    \item[--] $B(t, s)$: value at time $t$ of a bond maturing at time $s$
    \item[--] ${S}_t$: value of equity index at time $t$
    \item[--] $RE_t$: value of real estate at time $t$
    \item[--] $M_i$: asset allocation mix, henceforth fixed to $M=\Big(\frac{1}{3}, \frac{1}{3}, \frac{1}{5}, \frac{2}{15}\Big)$
    \item[--] $P_t$: premium paid at $t$ for period $(t,t+1]$ (per policy)
\end{itemize}
\end{multicols}

The policy variables are given by
\[ P_t=100,\qquad\text{and}\qquad G_t =\begin{cases} 0,& t=0 \\  G_{t-1}+P_{t-1},& t\ge 1\end{cases}\]

As for the demographic variables, the total dead and total alive are given by
\[ D_t=\sum_x D^x_t,\qquad\text{and}\qquad  L_t =\sum_x L^x_t, \]
where
\begin{itemize}
    \item[--] $D_t^x=L_{t-1}^x q_x(t)$: total dead of age $x$ at $t-1$ in period $(t-1, t]$
    \item[--] $L_t^x=L_{t-1}^x-D^x_t$: total alive of age $x$  at time $t$, with $L_0^x = 1000$ for all $x\in(30,70)$
    \item[--] $q_x(t)$: death rate for age $x$ at $t-1$ in period $(t-1, t]$.
\end{itemize}

In total the stochastic driver $X$ has $d=5$ components: two for the interest rate model, one for the equity model, one for the real estate model, and one for the stochastic mortality. The interest rate and equity models, for $B$ and ${S}$, are those described in Appendix~\ref{ESG app} and used in previous examples. The real estate model, for $RE$, is the same as the equity model from Appendix~\ref{ESG app}, but uses an independent stochastic driver and a lower volatility than the equity model. The stochastic mortality follows the Lee--Carter model (\cite{leecarter}) to provide a trend and random fluctuations over time. More specifically, we model the death rate as
\begin{align*}
q_x(t)&=1-e^{-m_x(t)}           &  m_x(t) &= e^{a_x+b_xk(t)}\\
k(t) &= k(t-1) - 0.365 + \epsilon_t         &  \epsilon_t &= 0.621X_t^{(lc)} \\
k(0)&=-11.41
\end{align*}
where
\begin{itemize}
    \item[--] $m_x(t)$: force of mortality at time $t$ for age $x$
    \item[--] $X_t^{(lc)}$: component of the stochastic driver $X$ at time $t$ used for mortality model
    \item[--] $a_x$ and $b_x$: Lee--Carter parameters (table in Section \ref{subsec:lee-carter params}).
\end{itemize}

\subsection{Results}\label{sec:varann_results}
The results for the variable annuity guarantee confirm those of the European call option example: the replicating martingale works very well, in particular the neural network model, which provides the best results in most cases. However, the more complex example also shows some limitations of the methods.

In the estimation of the present value, Table~\ref{tab:varann_mae_pv_all_methods} shows that nested Monte Carlo (nMC) is still very effective, but regression-based methods provide slightly better accuracy. The neural network model performs relatively badly in the case with the lowest number of samples (1,000) and high dimensions ($T=40$), providing the worst results in that case. This is caused by over-fitting, as we describe in the analysis of the mean relative $L_1$ error below. Indeed, as for the European option example, by cross-validation we have chosen $m=101$ nodes, whereof one is a bias term only, say $A_{101}=0$. The total number of parameters is thus $2{,}600+101=2{,}701$ for $T=5$, and $20{,}100+101=20{,}201$ for $T=40$. Alternative specifications of the width of the neural network are discussed in Appendix~\ref{subsec:sensitivity_nn}. The quality reaches that of the other methods as the number of samples increase. Finally, we observe that the polynomial LDR method---which is calculated with $p=10$---shows its advantage over the full polynomial basis not only in being able to solve the high dimensional case, but also in the estimation of the low dimensional case with low number of samples. The full polynomial basis has a MApE of 61\% due to the basis containing 3,276 elements, see Table \ref{tabdimPol}, which exceeds the 1,000 available samples. The polynomial LDR has a MApE of less than 0.1\% due to only containing 286 basis elements.

\begin{table}[ht]
    \centering
    \small
    \caption{Insurance liability, comparison of present value MApE (in percentage points)}
\begin{tabular}{@{}rrrrrrrr@{}}
\toprule
\multicolumn{2}{r}{\textbf{}} & \multicolumn{2}{r}{\textbf{Full Polynomial basis}} &  \textbf{LDR} & \multicolumn{2}{r}{\textbf{Neural Network}} \\
\textbf{Samples} & \textbf{nMC} &                    Regress-now & Regress-later & Regress-later &             Regress-now & Regress-later \\
\midrule
\multicolumn{1}{l}{\textbf{\textbf{\textbf{T}}=5}}  &   &   &   &   &   &   \\
           1,000 &          0.3 &                            0.2 &          61.0 &           \textless{}0.1 &                     0.3 &           0.1 \\
           5,000 &          0.3 &                            0.1 &           \textless{}0.1 &           \textless{}0.1 &                     0.1 &           \textless{}0.1 \\
          10,000 &          0.2 &                            0.1 &           \textless{}0.1 &           \textless{}0.1 &                     0.1 &           \textless{}0.1 \\
          50,000 &          0.1 &                            \textless{}0.1 &           \textless{}0.1 &           \textless{}0.1 &                     \textless{}0.1 &           \textless{}0.1 \\
\multicolumn{1}{l}{\textbf{\textbf{\textbf{T}}=40}}  &   &   &   &   &   &   \\
           1,000 &          0.5 &                            0.5 &             &           0.2 &                     0.6 &           6.1 \\
           5,000 &          0.3 &                            0.2 &             &           0.1 &                     0.2 &           0.3 \\
          10,000 &          0.3 &                            0.2 &             &           0.1 &                     0.2 &           0.4 \\
          50,000 &          0.2 &                            0.1 &             &           \textless{}0.1 &                     0.1 &           0.1 \\
\bottomrule
\end{tabular}
    \label{tab:varann_mae_pv_all_methods}
\end{table}

In the estimation of the expected shortfall, shown in Table~\ref{tab:varann_mae_es_all_methods} and Figure~\ref{fig:box_plot_var_ann_es}, and the analysis of the mean relative $L_1$ error, shown in Table~\ref{tab:varann_l1_error_all_methods}, we can observe that regress-later methods dominate over regress-now methods and nested Monte Carlo, with better mean absolute error and standard deviation. Unlike the case in the European call example where neural networks completely dominated the quality comparison, polynomial LDR shows better results in a few cases. However, which method shows better results is very sensitive to the choice of hyper-parameters. We provide a sensitivity analysis for hyper-parameters in Appendix~\ref{sec:sensitivity}. Overall, neural networks have more room for improvement with an alternative choice of hyper-parameters and can be assumed to produce better results in this variable annuity example. We can observe several cases where an insufficient number of training samples leads to over-fitting and poor out-of-sample results. For example, for the full polynomial basis and $T=5$, we find a large improvement in results when the training data changes from 1,000 samples to 5,000 samples. This basis has 3,276 elements, see Table \ref{tabdimPol}, which means that when working with 1,000 samples we have more parameters than samples. The same effect can be seen in the neural network replicating martingale for $T=40$ when the sample size changes from 10,000 to 50,000 samples. This can be explained by the fact that this model has 20,201 parameters, as mentioned above. In some cases, for example, the case neural network regress-later estimator for $T=5$ the MApE ES increases when the sample size increases from 5,000 to 10,000 and 50,000, see Table~\ref{tab:varann_mae_es_all_methods}. This behaviour is not present in the relative mean $L_1$ error, as evidenced in Table~\ref{tab:varann_l1_error_all_methods}. This is due to the divergence between the error being minimized---errors along the full cash flows distribution---and the error being measured---errors in the tail of $T=1$ conditional expectation distribution.

\begin{table}[ht]
    \centering
    \small
    \caption{Insurance liability, comparison of expected shortfall MApE (in percentage points)}
\begin{tabular}{@{}rrrrrrrr@{}}
\toprule
\multicolumn{2}{r}{\textbf{}} & \multicolumn{2}{r}{\textbf{Full Polynomial basis}} &  \textbf{LDR} & \multicolumn{2}{r}{\textbf{Neural Network}} \\
\textbf{Samples} & \textbf{nMC} &                    Regress-now & Regress-later & Regress-later &             Regress-now & Regress-later \\
\midrule
\multicolumn{1}{l}{\textbf{\textbf{\textbf{T}}=5}}  &   &   &   &   &   &   \\
           1,000 &         30.7 &                           80.6 &         613.9 &           7.9 &                   198.1 &           2.9 \\
           5,000 &         11.1 &                           18.6 &           0.5 &           5.3 &                    48.7 &           0.5 \\
          10,000 &          9.2 &                           10.3 &           0.3 &           5.1 &                    25.8 &           0.6 \\
          50,000 &          5.7 &                            3.3 &           0.2 &           3.7 &                     6.6 &           0.8 \\
\multicolumn{1}{l}{\textbf{\textbf{\textbf{T}}=40}}  &   &   &   &   &   &   \\
           1,000 &        105.4 &                          226.4 &             &          22.9 &                   520.7 &          14.4 \\
           5,000 &         32.7 &                           64.9 &             &           5.7 &                   147.3 &          11.1 \\
          10,000 &         18.7 &                           35.9 &             &           5.9 &                    84.4 &          10.5 \\
          50,000 &         10.5 &                           10.0 &             &           7.2 &                    13.2 &           0.5 \\
\bottomrule
\end{tabular}

    \label{tab:varann_mae_es_all_methods}
\end{table}

\begin{table}[ht]
    \centering
    \small
    \caption{Insurance liability, comparison of relative mean $L_1$ error (in percentage points) }
\begin{tabular}{@{}rrrrrrrr@{}}
\toprule
       \textbf{} & \multicolumn{2}{r}{\textbf{Full Polynomial basis}} &  \textbf{LDR} & \multicolumn{2}{r}{\textbf{Neural Network}} \\
\textbf{Samples} &                    Regress-now & Regress-later & Regress-later &             Regress-now & Regress-later \\
\midrule
\multicolumn{1}{l}{\textbf{\textbf{\textbf{T}}=5}}  &   &   &   &   &   \\
           1,000 &                            1.6 &          61.0 &           0.2 &                     4.7 &           0.1 \\
           5,000 &                            0.7 &           \textless{}0.1 &           0.2 &                     1.6 &           0.1 \\
          10,000 &                            0.5 &           \textless{}0.1 &           0.2 &                     1.0 &           \textless{}0.1 \\
          50,000 &                            0.2 &           \textless{}0.1 &           0.2 &                     0.3 &           \textless{}0.1 \\
\multicolumn{1}{l}{\textbf{\textbf{\textbf{T}}=40}}  &   &   &   &   &   \\
           1,000 &                            3.3 &             &           0.8 &                    10.0 &           6.1 \\
           5,000 &                            1.4 &             &           0.3 &                     3.4 &           0.6 \\
          10,000 &                            1.0 &             &           0.4 &                     2.2 &           0.5 \\
          50,000 &                            0.4 &             &           0.4 &                     0.6 &           0.1 \\
\bottomrule
\end{tabular}

    \label{tab:varann_l1_error_all_methods}
\end{table}

\begin{figure}[ht]
\includegraphics[width=\textwidth]{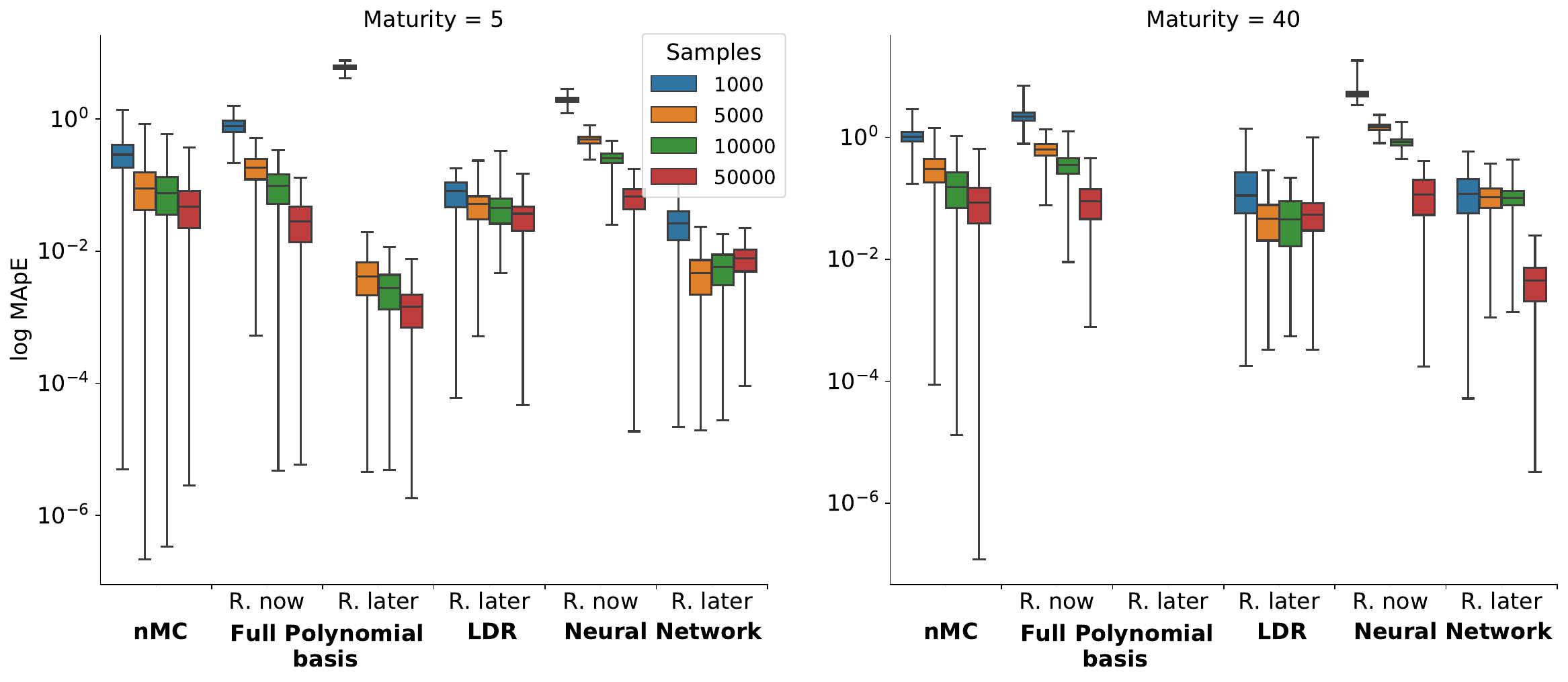}
\caption{Distribution of expected shortfall estimates per method for the insurance liability example. Boxes show the upper and lower quartiles of the empirical distributions, while whiskers show their maxima and minima. Due to the logarithmic scale, a visual artifact is introduced by which the inter-quartile range seems to increase with larger sample sizes---most notably in the nested Monte Carlo (nMC) for Maturity=40. In reality, the inter-quartile range decreases with larger sample sizes, but the logarithmic scale makes the box bigger for smaller errors.}
\label{fig:box_plot_var_ann_es}
\end{figure}

A comparison of the runtimes for the insurance example is important to determine the relative strength of the methods as feasible solution in the real world. Details are given in Appendix~\ref{app_runtime}.

\section{Conclusion}\label{sec_conc}

In the context of the need for accurate and fast calculations in portfolio pricing and risk management, we have introduced a data-driven method to build replicating martingales under different functional basis. This method yields lower errors than standard nested Monte Carlo simulation.

The model learns the features necessary for an effective low-dimensional representation from finite samples in a supervised setting. By doing so, it can be very effective in high-dimensional problems, without some of the usual difficulties associated with them.

We have presented two examples to demonstrate the usefulness of replicating martingales in the calculation of economic capital. The first is a typical benchmark example for calculations involving financial derivatives: a European call option. The second is a path-dependent insurance product, a variable annuity guarantee. Replicating martingales outperform other methods in the literature and in use in the financial industry for these two representative cases. This is illustrated by extensive comparisons and sensitivity analyses.

\clearpage
\section*{Appendix}
\appendix

\section{Quality metrics} \label{sec:qual_metrics}
Since the focus on this paper are applications in pricing and risk managements, we use two key quality metrics. The first one looks into the goodness of fit in the tail of the distribution, and the second one the goodness of fit across the body of the distribution. For the tail of the distribution we look at expected shortfall and value at risk, for the loss-making tail. For the body of the distribution we look at the $L_1$ error.

We treat the models as statistical estimators since their estimates are subject to the randomness of their inputs. For that reason, for each those metrics described above we derive an empirical distribution based on $R$ macro-runs of the the entire simulation-estimation-prediction chain of calculations. That means that we also need to define which metric summarizes the results of the empirical distribution. In both cases (tail error and $L_1$ error) we use the mean absolute error. In all cases we work with relative errors, expressed as a percentage. Root mean squared errors would have been an option but the advantages of the mean absolute error have been well documented in \cite{willmott2005advantages} and \cite{chai2014root}.

In the sections below we describe in detail the calculation of our two quality metrics: mean absolute percentage error on tail error (MApE), and mean relative $L_1$ error.

\subsection{Mean absolute percentage error}
Let us consider an empirical distribution of $X_{1:t}$, composed of $n$ samples. From this distribution we can obtain an empirical distribution of $V_t$. Given a function $f^*$, we obtain $R$ repetitions of its finite sample estimator $\widehat{f}$. For each function in these $\{\widehat{f}_j\}_{j=1}^R$, we can produce an empirical distribution of its value estimator $\widehat{V}_t=\E_t^\Q[\widehat{f}(X)]$ using $X_{1:t}$, therefore obtaining a set of empirical distributions $\{\widehat{V}_t^{(j)}\}_{j=1}^R$.

Given a benchmark expected shortfall calculation at $\alpha$ (e.g., $\alpha=99\%$) confidence, $\mbox{ES}_\alpha[-\Delta V_t]$, an estimator of such quantity ${\mbox{ES}_\alpha[-\Delta \widehat{V}_t^{(j)}]}$, and $R$ repetitions (independent samples) of such estimator $j=1\dots R$, the mean absolute percentage error (MApE) is defined as

\[ \mbox{MApE ES} = \frac{1}{R} \sum_{j=1}^R \frac{| {{\mbox{ES}_\alpha[-\Delta \widehat{V}_t^{(j)}]}} - \mbox{ES}_\alpha[-\Delta V_t] |}{ \mbox{ES}_\alpha[-\Delta V_t]}.
\]

The MApE metric can also be applied to the present value of $V_t$, $\E[V_t]=V_0$:

\[ \mbox{MApE PV} = \frac{1}{R} \sum_{j=1}^R \frac{| \widehat{V}_0^{(j)} - V_0 |}{ V_0}.
\]

\subsection{Mean relative $L_1$ error}

Given the above, the mean relative $L_1$ error is defined as

\[ \frac{1}{R} \sum_{j=1}^R \frac{ \E[|\widehat{V}_t^{(j)} - V_t|]}{ \E[|  V_t|]}\]
This metric is related to the error on the expected shortfall in the following way

\[ | \mbox{ES}_\alpha[\hat{V}_t] - \mbox{ES}_\alpha[V_t] | \leq \frac{1}{\alpha}\E[|\hat{V}_t-V_t|].\]
This shows that for any $\alpha$, the expected shortfall MApE is bounded by a multiple of the $L_1$ error. While the MApE ES is a metric calculated for a particular $\alpha$ and only takes into account the distribution beyond the $\alpha$-th percentile, the mean relative $L_1$ error takes into account the whole distribution and bounds the expected shortfall error for any $\alpha$.

\section{Economic scenario generator}\label{ESG app}

We describe the basic financial models underlying the economic scenario generator of the examples in this paper. We assume that the stochastic driver is multinormal $\vect(X)\sim N(0,I_{dT})$, and consider either $d=3$ or $d=5$.

\subsection{Interest rate model}
The interest rate model is based on the continuous time Hull--White short rate model \[dr_t = \kappa (b(t)-r_t)dt+\sigma dW_t,\] for parameters $\kappa$, $\sigma$, and function $b(t)$, where $W$ denotes a Brownian motion under the risk-neutral measure, see, e.g., \cite{glasserman2013monte}. The nominal price at time $t$ of a zero-coupon bond with maturity $T$ is given by
\[    B(t,T)=\exp(-A(t,T)r_t+C(t,T)) \]
where
\begin{align*}
    A(t,T)&=\frac{1}{\kappa} (1-e^{-\kappa(T-t)}),\\
    C(t,T)&=-\kappa h(t,T)
    +\frac{\sigma^2}{2\kappa^2}\Big[(T-t)+\frac{1}{2\kappa}(1-e^{-2\kappa(T-t)})+\frac{2}{\kappa}(e^{-\kappa(T-t)}-1)\Big]
\end{align*}
and where we denote $h(t,T)=\int_t^T \int_t^u e^{-\kappa(u-s)}b(s)dsdu$.

In discrete time, we exactly simulate the short rate, $r_t$, and log-cash account, $Y_t=\int_0^t r_u du$, jointly from the above Hull--White model according to the formulas in \cite{glasserman2013monte}, which are based on a two-dimensional Gaussian stochastic driver. We therefore use the first two components of $X$, that is, $X_{1,t}$ and $X_{2,t}$, as follows. As for $r_t$, define
\begin{align*}
    g(t)&=\int_t^{t+1} e^{-\kappa(t+1-s)}b(s)ds \\
    \sigma_r &=\frac{\sigma^2}{2\kappa}(1-e^{-2\kappa})
\end{align*}   
and set $r_{t+1}= e^{-\kappa} r_t + \kappa g(t)+ \sigma_r  X_{1,t+1}$.

As for $Y_t$, define
\begin{align*}
  h(t)&=h(t,t+1)\\
  \sigma_Y &= \frac{\sigma^2}{\kappa^2}\Big(1+\frac{1}{2\kappa}(1-e^{-2\kappa}) +\frac{2}{\kappa}(e^{-\kappa}-1)\Big) \\ 
    \sigma_{rY}  &=\frac{\sigma^2}{2\kappa}(1+e^{-2\kappa}-2e^{-\kappa})  \\
    \rho_{rY} &=  \sigma_{rY}/(\sigma_r\sigma_Y) 
\end{align*}
and set $Y_{t+1} = Y_t + (1/\kappa)(1-e^{-\kappa})r_t+\kappa h(t) + \sigma_Y X'_{2,t+1}$, for the correlated driver $X'_{2,t+1}=\rho_{rY}X_{1,t+1}+\sqrt{1-\rho^2_{rY}}X_{2,t+1}$.

\subsection{Equity and real estate index models}

For a given matrix $\Sigma$ that encodes the desired correlations, we denote the correlated Gaussian stochastic driver $X'=\Sigma X$. For the examples in this paper, we set $X'_{1,t}=X_{1,t}$, $X'_{2,t}$ as above, $X'_{3,t}$, $X'_{4,t}$ to be correlated with $X_{1,t}$, and $X'_{5,t}=X_t^{(lc)}=X_{5,t}$---used for the mortality model---to be independent of all other variables.

For both, equity and real estate, a geometric Brownian process models the respective index excess return, with the recursive formula
\[Z_{j,t}=Z_{j,t-1}\exp \left( - \sigma_j ^{2}/2+\sigma_j X'_{ j,t}\right),\]
where $j=3$ for the equity index and $j=4$ for the real estate index. The equity index ${S}_t$ and the real estate index $RE_t$ are then given by ${S}_t=C_t Z_{3,t}$ and $RE_t=C_t Z_{4,t}$, respectively, where $C_t=\exp(Y_t)$ denotes the cash account.

\subsection{Lee--Carter parameters} \label{subsec:lee-carter params}

The Lee--Carter parameters for the mortality model are based on the findings in the original paper \cite{leecarter}, and they are shown in Table~\ref{tab:lee--carter}.

\begin{table}[H]
\caption{Lee--Carter parameters $a_x$ and $b_x$ for every age $x$}
\centering
\begin{tabular}{@{}lll@{}}
\toprule
\textbf{x}                & $\mathbf{a_x}$ & $\mathbf{b_x}$ \\ \midrule
0                         & -3.641090     & 0.90640       \\
(1, 2, 3, 4)              & -6.705810     & 0.11049       \\
(5, 6, 7, 8, 9)           & -7.510640     & 0.09179       \\
(10, 11, 12, 13, 14)      & -7.557170     & 0.08358       \\
(15, 16, 17, 18, 19)      & -6.760120     & 0.04744       \\
(20, 21, 22, 23, 24)      & -6.443340     & 0.05351       \\
(25, 26, 27, 28, 29)      & -6.400620     & 0.05966       \\
(30, 31, 32, 33, 34)      & -6.229090     & 0.06173       \\
(35, 36, 37, 38, 39)      & -5.913250     & 0.05899       \\
(40, 41, 42, 43, 44)      & -5.513230     & 0.05279       \\
(45, 46, 47, 48, 49)      & -5.090240     & 0.04458       \\
(50, 51, 52, 53, 54)      & -4.656800     & 0.03830       \\
(55, 56, 57, 58, 59)      & -4.254970     & 0.03382       \\
(60, 61, 62, 63, 64)      & -3.856080     & 0.02949       \\
(65, 66, 67, 68, 69)      & -3.473130     & 0.02880       \\
(70, 71, 72, 73, 74)      & -3.061170     & 0.02908       \\
(75, 76, 77, 78, 79)      & -2.630230     & 0.03240       \\
(80, 81, 82, 83, 84)      & -2.204980     & 0.03091       \\
(85, 86, 87, 88, 89)      & -1.799600     & 0.03091       \\
(90, 91, 92, 93, 94)      & -1.409363     & 0.03091       \\
(95, 96, 97, 98, 99)      & -1.036550     & 0.03091       \\
(100, 101, 102, 103, 104) & -0.680350     & 0.03091       \\
(105, 106, 107, 108)      & -0.341050     & 0.03091       \\ \bottomrule
\end{tabular}
\label{tab:lee--carter}
\end{table}

\section{Proofs}
\label{sec:proof_thmVk}

This section contains all proofs and some auxiliary results of independent interest.

\subsection{Proof of Lemma \ref{lemexi}}
In view of \eqref{eqnbetaAnew} and by orthogonality, we have $\| f -\phi_\theta^\top\beta\|^2_{L^2_\Q} \ge \| f -\phi_\theta^\top\beta_\theta\|^2_{L^2_\Q} = \|f\|_{L^2_\Q}^2 - \langle f,\phi_\theta^\top\beta_\theta\rangle_{L^2_\Q}$, for all $(\theta,\beta)\in\Theta\times\R^m$. On the other hand, by assumption \ref{lemexi3} we can write $\beta_\theta= \langle\phi_\theta,\phi_\theta^\top\rangle_{L^2_\Q}^{-1} \langle \phi_\theta,f\rangle_{L^2_\Q}$ and hence $\langle f,\phi_\theta^\top\beta_\theta\rangle_{L^2_\Q}  = \langle f,\phi_\theta^\top\rangle_{L^2_\Q}\langle\phi_\theta,\phi_\theta^\top\rangle_{L^2_\Q}^{-1} \langle \phi_\theta,f\rangle_{L^2_\Q}$. Hence \eqref{optthebet} is equivalent to \eqref{eqnbetaAnew} and
\[ \max_{\theta \in \Theta } \langle f,\phi_\theta^\top\rangle_{L^2_\Q}\langle\phi_\theta,\phi_\theta^\top\rangle_{L^2_\Q}^{-1} \langle \phi_\theta,f\rangle_{L^2_\Q}.\]
By the assumptions of the lemma, $\theta\mapsto \langle f,\phi_\theta^\top\rangle_{L^2_\Q}\langle\phi_\theta,\phi_\theta^\top\rangle_{L^2_\Q}^{-1} \langle \phi_\theta,f\rangle_{L^2_\Q}$ is continuous, and hence attains its maximum on the compact set $\Theta$. This completes the proof.

\subsection{Proof of Theorem \ref{thmVk}}

Theorem~\ref{thmVk} follows from Lemmas~\ref{lemImAphiA} and \ref{lemlinalgA} below.

\begin{lemma}\label{lemImAphiA}
Let $A,\tilde A\in\R^{dT\times p}$ with full rank and $b,\tilde b\in\R^p$. The following are equivalent:
\begin{enumerate}
  \item\label{lemImAphiA1} $\ker A^\top =\ker \tilde A^\top $
  \item\label{lemImAphiA1a} $ \tilde A = A S $, for some invertible $p\times p$-matrix $S$
  \item\label{lemImAphiA2} $\spn\{\phi_{(A,b),1},\dots,\phi_{(A,b),m}\}=\spn\{\phi_{(\tilde A,\tilde b),1},\dots,\phi_{(\tilde A,\tilde b),m}\}$
\end{enumerate}
\end{lemma}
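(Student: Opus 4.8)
The plan is to prove the cycle $(i)\Rightarrow(ii)\Rightarrow(iii)\Rightarrow(i)$, but first to dispose of the bias vectors, which turn out to be irrelevant. Since $\{g_1,\dots,g_m\}$ is a basis of ${\rm Pol}_\delta(\R^p)$, the span $\spn\{\phi_{(A,b),1},\dots,\phi_{(A,b),m}\}$ equals $\{x\mapsto h(A^\top x+b)\mid h\in{\rm Pol}_\delta(\R^p)\}$; because precomposition with the translation $y\mapsto y+b$ is a linear automorphism of ${\rm Pol}_\delta(\R^p)$, this space is unchanged if $b$ is set to $0$, so it equals $W_A:=\{x\mapsto h(A^\top x)\mid h\in{\rm Pol}_\delta(\R^p)\}$. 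Thus $(iii)$ is equivalent to $W_A=W_{\tilde A}$, and the biases play no role. I also record that, since $A$ has full column rank, the map $A^\top$ is surjective onto $\R^p$, so the pullback $h\mapsto h(A^\top\cdot)$ is injective and degree-preserving on ${\rm Pol}_\delta(\R^p)$; this is the tool for reading off $A$ from $W_A$.

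For $(i)\Rightarrow(ii)$ I argue in linear algebra: $\ker A^\top=(\Ima A)^\perp$, so $(i)$ is equivalent to $\Ima A=\Ima\tilde A$. Since both matrices have full column rank $p$ and hence the same $p$-dimensional column space, every column of $\tilde A$ is a linear combination of the columns of $A$, giving $\tilde A=AS$ for some $p\times p$ matrix $S$, and comparing ranks forces $S$ invertible. For $(ii)\Rightarrow(iii)$, if $\tilde A=AS$ with $S$ invertible then for $h\in{\rm Pol}_\delta(\R^p)$ one has $h(\tilde A^\top x)=h\big(S^\top(A^\top x)\big)$, and $y\mapsto h(S^\top y)$ again lies in ${\rm Pol}_\delta(\R^p)$; this shows $W_{\tilde A}\subseteq W_A$, and the same argument with $S^{-1}$ gives the reverse inclusion, so $W_A=W_{\tilde A}$.

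The substantive direction is $(iii)\Rightarrow(i)$, and this is where I expect the real work to lie: I must recover the column space $\Ima A$ intrinsically from the subspace $W_A$. The idea is to isolate the affine functions it contains. I claim $W_A\cap{\rm Pol}_1(\R^{dT})=\{x\mapsto\ell^\top x+c\mid \ell\in\Ima A,\ c\in\R\}$. Indeed, because the pullback is degree-preserving, $h(A^\top x)$ is affine in $x$ exactly when $h$ is affine, and an affine $h(y)=u^\top y+c$ gives $h(A^\top x)=(Au)^\top x+c$, where $Au$ sweeps out all of $\Ima A$ as $u$ ranges over $\R^p$. Hence the set of linear parts of the affine members of $W_A$ equals $\Ima A$, so $W_A$ determines $\Ima A$. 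Consequently $W_A=W_{\tilde A}$ forces $\Ima A=\Ima\tilde A$, i.e. $\ker A^\top=\ker\tilde A^\top$, which is $(i)$, closing the cycle.

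A technical point to handle carefully throughout is that these spans are subspaces of $L^2_\Q$, so the identification of a polynomial with its $L^2_\Q$-equivalence class must be faithful for the degree-extraction argument to be valid. This is guaranteed whenever polynomials of degree at most $\delta$ are linearly independent in $L^2_\Q$---which holds for the measures considered here, in particular the multinormal $\Q$ of the examples---so that no two distinct polynomials are identified and the affine part of $W_A$ is genuinely as described.
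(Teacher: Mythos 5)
Your proof is correct and follows essentially the same route as the paper's: the bias is absorbed because translation is an automorphism of ${\rm Pol}_\delta(\R^p)$, the implication (ii)$\Rightarrow$(iii) uses that composition with an invertible linear map permutes that polynomial space, and (iii)$\Rightarrow$(i) rests on the observation that the degree-one elements of the span detect $\Ima A$. The only differences are presentational: the paper proves (iii)$\Rightarrow$(i) by contradiction using the single witness $x\mapsto x^\top AA^\top\tilde x$, which is non-constant along a line in $\ker\tilde A^\top$ while every element of the other span is constant there, whereas you directly characterize the affine part of $W_A$; and you are somewhat more explicit than the paper about the need for polynomials of degree at most $\delta$ to be faithfully represented in $L^2_\Q$.
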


\begin{proof} 
\ref{lemImAphiA1}$\Leftrightarrow$\ref{lemImAphiA1a}: is elementary.

\ref{lemImAphiA1a}$\Rightarrow$\ref{lemImAphiA2}: define $p_i(y)=g_i(y+b)$ and $\tilde p_i(y)=g_i(S^\top y+\tilde b)$ and note that $\{ p_1,\dots, p_m\}$ as well as $\{\tilde p_1,\dots,\tilde p_m\}$ forms a basis of ${\rm Pol}_\delta(\R^{p})$. On the other hand, we have $\phi_{(A,b),i}(x)=g_i(A^\top x+ b) = p_i(A^\top x)$ and $\phi_{(\tilde A,\tilde b),i}(x)=g_i(S^\top A^\top x+\tilde b)=\tilde p_i(A^\top x)$. This yields the claim.

\ref{lemImAphiA2}$\Rightarrow$\ref{lemImAphiA1}: we argue by contradiction and assume that there exists some $\tilde x\in\ker\tilde A^\top\setminus\ker A^\top$. Then $y\mapsto y^\top A^\top\tilde x$ is in ${\rm Pol}_\delta(\R^{p})$ and can thus be written as linear combination $y^\top A^\top\tilde x=\sum_{i=1}^m c_i g_i(y+b)$. We
obtain that the function $h(x)= x^\top AA^\top \tilde x=\sum_{i=1}^m c_i\phi_{(A,b),i}(x)$ lies in the $\spn\phi_{(A,b)}$. But $h\notin\spn\phi_{(\tilde A,\tilde b)}$ because $\phi_{(\tilde A,\tilde b),i}(t\tilde x)=g_i(\tilde b)$ is constant in $t\in\R$, for all $i$, while $h(t\tilde x)=\|A^\top\tilde x\|^2 t$ is not. This completes the proof.
\end{proof}

\begin{lemma}\label{lemlinalgA}
Let $A,\tilde A\in\R^{dT\times p}$ with full rank. The following are equivalent:
\begin{enumerate}
  \item\label{lemlinalgA1} $\tilde A\in V_p(\R^{dT})$ and $ \tilde A = A S $, for some invertible $p\times p$-matrix $S$
  \item\label{lemlinalgA2} $\tilde A=A(A^\top A)^{-1/2}U$ for some orthogonal $p\times p$-matrix $U$
  \end{enumerate}
\end{lemma}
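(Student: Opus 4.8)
The plan is to reduce both statements to manipulations involving the symmetric positive definite matrix $M:=A^\top A$. Since $A$ has full rank $p$, the $p\times p$ matrix $M$ is symmetric and positive definite, so it admits a unique symmetric positive definite square root $M^{1/2}$ and a symmetric positive definite inverse square root $M^{-1/2}=(M^{1/2})^{-1}$; both are invertible and commute with $M$. The entire lemma is elementary linear algebra once these objects are in place, and the only thing to watch is that the \emph{symmetric} square root is used throughout, so that the relevant products telescope.

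For the implication \ref{lemlinalgA2}$\Rightarrow$\ref{lemlinalgA1}, I would set $S:=M^{-1/2}U$. As a product of the invertible matrices $M^{-1/2}$ and $U$, this $S$ is invertible and satisfies $\tilde A=AS$, giving the second half of \ref{lemlinalgA1} immediately. To obtain $\tilde A\in V_p(\R^{dT})$, I would compute $\tilde A^\top\tilde A=U^\top M^{-1/2}(A^\top A)M^{-1/2}U=U^\top M^{-1/2}M M^{-1/2}U$. Using that $M^{-1/2}$ is symmetric and commutes with $M$, the middle factor collapses to $I_p$, leaving $\tilde A^\top\tilde A=U^\top U=I_p$ by orthogonality of $U$.

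For the converse \ref{lemlinalgA1}$\Rightarrow$\ref{lemlinalgA2}, I start from $\tilde A=AS$ with $S$ invertible and $\tilde A^\top\tilde A=I_p$. Substituting gives $S^\top M S=I_p$. I would then define $U:=M^{1/2}S$ and check $U^\top U=S^\top M^{1/2}M^{1/2}S=S^\top M S=I_p$. Solving $U=M^{1/2}S$ for $S$ yields $S=M^{-1/2}U$, whence $\tilde A=AS=A(A^\top A)^{-1/2}U$, which is exactly \ref{lemlinalgA2}. I do not expect a genuine obstacle here; the one point requiring care is that the constructed $U$ is a square $p\times p$ matrix (a product of the invertibles $M^{1/2}$ and $S$), so that $U^\top U=I_p$ genuinely forces $U$ to be orthogonal rather than merely column-orthonormal.
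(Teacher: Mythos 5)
Your proposal is correct and follows essentially the same route as the paper: for \ref{lemlinalgA1}$\Rightarrow$\ref{lemlinalgA2} the paper likewise writes $I_p=\tilde A^\top\tilde A=S^\top(A^\top A)^{1/2}(A^\top A)^{1/2}S$ and concludes that $U=(A^\top A)^{1/2}S$ is orthogonal with $S=(A^\top A)^{-1/2}U$, and it dismisses the converse as elementary, which you simply spell out. No gaps.
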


\begin{proof}
\ref{lemlinalgA1}$\Rightarrow$\ref{lemlinalgA2}: we obtain $I_p=\tilde A^\top\tilde A=S^\top A^\top A S=S^\top(A^\top A)^{1/2} (A^\top A)^{1/2}S$. Hence the $p\times p$-matrix $U=(A^\top A)^{1/2}S$ is orthogonal and $S=(A^\top A)^{-1/2}U$, which yields the claim.

\ref{lemlinalgA2}$\Rightarrow$\ref{lemlinalgA1}: is elementary.
 \end{proof}

\subsection{Proof of Theorem \ref{thmpolyuniqueN}}

As $V_p(\R^{dT})$ is a compact manifold, it follows by inspection that the assumptions of Lemma \ref{lemexi} are met. Hence there exists a minimizer of \eqref{optthebet}. The non-uniqueness statement is proved by means of the following counterexample. Assume $f(x)=f(Vx)$ and the pushforward $V_\ast\Q=\Q$ for some orthogonal $dT\times dT$-matrix $V$. Then, for any $A\in V_p(\R^{dT})$ and $\beta\in\R^m$, we have $\|f - \phi_{A}^\top\beta\|_{L^2_\Q} = \| f - \phi_{V^\top A}^\top\beta\|_{L^2_\Q}$. But $\spn\{\phi_{A,1},\dots,\phi_{A,m}\}\neq \spn\{\phi_{V^\top A,1},\dots,\phi_{V^\top A,m}\}$ in general by Theorem \ref{thmVk}. This completes the proof of Theorem \ref{thmpolyuniqueN}.

\subsection{Proof of Theorem \ref{thmReLUspanNew}}

We follow the heuristic arguments of \cite{he_etal_20}.\footnote{We also complete some arguments in \cite{he_etal_20}, who do not explain what kind of derivative ``$\nabla \sum_{i=1}^m c_j\phi_{(a_j,b_j)}$'' stands for.}  First, note that any linear combination $h=\sum_{j=1}^m c_j \phi_{( a_j, b_j)}$ is a continuous, piece-wise affine function. As such it is \say{Bouligand differentiable} on $\R^{dT}$, see \cite[Theorem 3.1.2]{sch_12}. That is, its directional derivative $\nabla_v h(x)=\lim_{\epsilon\downarrow 0} (h(x+\epsilon v)-h(x))/\epsilon$ exists for all $x,v\in\R^{dT}$, and it provides a first order approximation, $\lim_{y\to x} \|h(y)- h(x)-\nabla_{y-x} h(x)\|/\|y-x\|=0$. Accordingly, the classical calculus rules carry over and we have $\nabla_v \sum_{j=1}^m c_j \phi_{( a_j, b_j)}(x) = \sum_{j=1}^m c_j \nabla_v\phi_{( a_j, b_j)}(x)$, see \cite[Corollary 3.1.1]{sch_12}.

Next, for a function $h:\R^{dT}\to\R$, we denote by $D_h$ the set of points of discontinuity. For $h(x)=\nabla_v\phi_{(a_i,b_i)}(x)$ and any $v\in\R^{dT}$ we obtain
\[  D_{ \nabla_v \phi_{(a_i,b_i)}}=\begin{cases}
\emptyset,& \text{if $v^\top a_i=0$,}\\
  H_{(a_i,b_i)},& \text{otherwise,}
\end{cases}\]
for the affine hyperplane $H_{(a_i,b_i)}=\{ x\mid a_i^\top x+b_i=0\}$.

Now assume $\phi_{(a_i,b_i)}\in \spn\{ \phi_{(\tilde a_1,\tilde b_1)},\dots,\phi_{(\tilde a_m,\tilde b_m)}\}$, so that $\phi_{(a_i,b_i)}=\sum_{j=1}^m c_j \phi_{(\tilde a_j,\tilde b_j)}$ for some real coefficients $c_j$. Then
\[H_{(a_i,b_i)}= D_{ \nabla_v \phi_{(a_i,b_i)}}=D_{ \sum_{j=1}^m c_j \nabla_v\phi_{(\tilde a_j,\tilde b_j)} }\subseteq\cup_{j=1}^m D_{ \nabla_v \phi_{(\tilde a_j,\tilde b_j)}}\subseteq \cup_{j=1}^m H_{(\tilde a_j,\tilde b_j)},\]
where we used the obvious relation $D_{t_1 h_1+t_2 h_2}\subseteq D_{h_1}\cup D_{h_2}$, for functions $h_1,h_2$ and real coefficients $t_1,t_2$. This implies that $(a_i,b_i)\in\{\pm (\tilde a_1,\tilde b_1),\dots, \pm (\tilde a_m,\tilde b_m)\}$. Since $i$ was arbitrary, we obtain $\{\pm (a_1,b_1),\dots, \pm (a_m,b_m)\}\subseteq\{\pm (\tilde a_1,\tilde b_1),\dots, \pm (\tilde a_m,\tilde b_m)\}$. A similar argument for $(\tilde a_i,\tilde b_i)$ in lieu of $(a_i,b_i)$ implies the converse inclusion. This proves \ref{thmReLUspanNew1}.

For the proof of \ref{thmReLUspanNew2} we argue by contradiction. Suppose \eqref{thmReLUspanNew2eq} does not hold, so that $\phi_{(a_i,b_i)}\in\spn\{ \phi_{(a_j,b_j)}\mid j\neq i\}$ for some $i$. Hence \eqref{thmReLUspanNew1eq} holds for $(\tilde a_j,\tilde b_j):=( a_j, b_j)$ for all $j\neq i$, and $(\tilde a_i,\tilde b_i):=(a_j,b_j)$ for some $j\neq i$. But then part \ref{thmReLUspanNew1} implies that $(a_i,b_i)=\pm (a_j,b_j)$ for some $j\neq i$, which contradicts the assumption of \ref{thmReLUspanNew2}. This completes the proof of Theorem \ref{thmReLUspanNew}.

\begin{remark}
One may reckon that \eqref{thmReLUspanNew1eq} and \eqref{thmReLUspanNew2eq} together imply 
\[\{  (a_1,b_1),\dots,   (a_m,b_m)\}=\{  (\tilde a_1,\tilde b_1),\dots,   (\tilde a_m,\tilde b_m)\}.\]
However, this is not true in general. Indeed, let $(a_1,b_1),\dots,(a_3,b_3)\in \Scal_{dT}$ be as in Example~\ref{exlinind}, and define $S=\{(a_1,b_1),\pm(a_2,b_2),\pm(a_3,b_3)\}$ and $\tilde S=\{-(a_1,b_1),\pm(a_2,b_2),\pm(a_3,b_3)\}$. Then $F=\{\phi_{(a,b)}\mid (a,b)\in S\}$ and $\tilde F=\{\phi_{(a,b)}\mid (a,b)\in \tilde S\}$ are both linearly independent sets. On the other hand, we have $c_1\phi_{-(a_1,b_1)} = c_1 \phi_{(a_1,b_1)} + \sum_{i=2}^3 c_i ( \phi_{(a_i,b_i)}-\phi_{-(a_i,b_i)})$, and hence $\spn F =\spn \tilde F$. But $S\neq \tilde S$.
\end{remark}

\subsection{Proof of Theorem \ref{thmNEReLU}}

We prove the theorem by means of two counterexamples. First, let $d=T=1$, $\Q=N(0,1)$, and $f(x)=1_{[0,\infty)}(x)$. For the feature map, we let $m=2$, and set $b_n=1/n$, $a_n=\sqrt{1-b_n^2}$, and $\theta_n=((a_n,b_n),(0,1))\in (\Scal_1)^2$. Then $\theta_n\to ((0,1),(0,1))$ as $n\to\infty$. On the other hand, for $\beta_n=(1/b_n,-a_n/b_n)^\top$, we have
  \[   \phi_{\theta_n}(x)^\top\beta_n = \frac{1}{b_n}(a_n x+b_n)^+  -\frac{a_n}{b_n}x^+ \to 1_{[0,\infty)}(x) \quad\text{in $L^2_\Q$ as $n\to\infty$.}\]
Hence $\inf_{(\theta,\beta)\in (\Scal_1)^2\times\R^2} \|f - \phi_\theta^\top\beta\|_{L^2_\Q} = \lim_{n} \|f - \phi_{\theta_n}^\top\beta_{n}\|_{L^2_\Q} =0$, but the infimum is not attained, $\|f - \phi_\theta^\top\beta\|_{L^2_\Q}>0$ for all $(\theta,\beta)\in (\Scal_1)^2\times\R^2$. This proves the non-existence statement.

For the non-uniqueness, assume $f(x)=f(Vx)$ and the pushforward $V_\ast\Q=\Q$ for some $dT\times dT$-matrix $V$. Then, for any $(A,b)\in (\Scal_{dT})^m$ and $\beta\in\R^m$, we have $\|f - \phi_{(A,b)}^\top\beta\|_{L^2_\Q} = \| f - \phi_{(V^\top A,b)}^\top\beta\|_{L^2_\Q}$. But $\spn\{\phi_{(A,b),1},\dots,\phi_{(A,b),m}\}\neq \spn\{\phi_{(V^\top A,b),1},\dots,\phi_{(V^\top A,b),m}\}$ in general by Theorem \ref{thmReLUspanNew}. This completes the proof of Theorem \ref{thmNEReLU}.

\section{Comparison of runtimes}\label{app_runtime}

We discuss how long it takes to run the training and prediction phases on each model from Sections~\ref{sec:call_example} and \ref{sec:insurance}. This involves: running the regression on the number of samples indicated on the first column and calculating $\widehat{V}_1(x_i)$ for 1,000,000 validation samples. 

Table~\ref{tab:timing_comparison_call} shows the runtimes for the European call option example from Section~\ref{sec:call_example}. They clearly show the effect of the dimensionality reduction in the computational cost of the replicating martingale method. 

\begin{table}[H]
    \centering
    \small
    \caption{European call, comparison of runtime (in seconds), single core AMD Opteron 6380 }
\begin{tabular}{@{}rrrrrrrr@{}}
\toprule & &
\multicolumn{2}{r}{\textbf{Full Polynomial basis}} &  \textbf{LDR} & \multicolumn{2}{r}{\textbf{Neural Network}} \\
\textbf{\textbf{Samples}} & \textbf{\textbf{Lasso}} &                    Regress-now & Regress-later & Regress-later &             Regress-now & Regress-later \\
\midrule
\multicolumn{1}{l}{\textbf{\textbf{\textbf{T}}=5}}  &   &   &   &   &   &   \\
                     1,000 &                     1.7 &                            0.9 &           1.5 &           1.4 &                     9.5 &          18.4 \\
                     5,000 &                     3.4 &                            0.9 &           2.3 &           1.6 &                    16.1 &          21.1 \\
                    10,000 &                     6.1 &                            0.9 &           3.0 &           2.1 &                    24.0 &          24.4 \\
                    50,000 &                    24.9 &                            1.0 &          13.4 &           7.1 &                    92.6 &          49.4 \\
\multicolumn{1}{l}{\textbf{\textbf{\textbf{T}}=40}}  &   &   &   &   &   &   \\
                     1,000 &                   153.4 &                            0.9 &             &           7.2 &                    10.0 &          18.5 \\
                     5,000 &                   897.2 &                            0.9 &             &          22.3 &                    15.5 &          24.4 \\
                    10,000 &                  2,199.0 &                            1.0 &             &          51.7 &                    24.0 &          30.0 \\
                    50,000 &                       &                            1.4 &             &         271.7 &                    19.1 &          76.1 \\
\bottomrule
\end{tabular}

    \label{tab:timing_comparison_call}
\end{table}

Table~\ref{tab:timing_comparison_var_ann} shows the runtimes for the insurance liability example from Section~\ref{sec:insurance}. As above, unsurprisingly, we find regress-now methods to be faster than regress-later methods. This might partially explain the popularity with practitioners, especially for frequent calculations that do not require high precision. However, for quarterly or annual calculations of regulatory solvency, it seems hard to justify the much higher error rates for the benefit of saving a few minutes of calculations.
 
\begin{table}[ht]
    \centering
    \small
    \caption{Insurance liability, comparison of runtime (in seconds), single core AMD Opteron 6380 }
\begin{tabular}{@{}rrrrrrrr@{}}
\toprule & 
\multicolumn{2}{r}{\textbf{Full Polynomial basis}} &  \textbf{LDR} & \multicolumn{2}{r}{\textbf{Neural Network}} \\
\textbf{\textbf{Samples}} &                    Regress-now & Regress-later & Regress-later &             Regress-now & Regress-later \\
\midrule
\multicolumn{1}{l}{\textbf{\textbf{\textbf{T}}=5}}  &   &   &   &   &   \\
                     1,000 &                            0.8 &           4.3 &          30.4 &                     4.0 &           5.5 \\
                     5,000 &                            0.9 &          54.4 &         113.2 &                    11.1 &           8.2 \\
                    10,000 &                            1.1 &          45.3 &          49.6 &                    19.8 &          11.9 \\
                    50,000 &                            3.0 &         108.1 &         238.6 &                    95.1 &          42.7 \\
\multicolumn{1}{l}{\textbf{\textbf{\textbf{T}}=40}}  &   &   &   &   &   \\
                     1,000 &                            1.3 &             &         279.1 &                     4.6 &           6.3 \\
                     5,000 &                            2.6 &             &         698.9 &                    12.9 &          15.7 \\
                    10,000 &                            4.3 &             &        1,815.1 &                    22.9 &          26.5 \\
                    50,000 &                           17.9 &             &        1,472.8 &                    31.7 &         115.5 \\
\bottomrule
\end{tabular}

    \label{tab:timing_comparison_var_ann}
\end{table}

The slowest method is the polynomial LDR, which for the high dimensional problem can take take up to 25 minutes to find the solution and make the estimation of the out-of-sample distribution. This time is entirely dominated by the optimization---training---step, not the estimation---prediction---step. The polynomial LDR method runtime is extremely sensitive to the $p$ parameter. For example, for $T=40$ and sample size 1,000, it takes 33 seconds to solve with $p=5$ and 279.1 seconds to solve with $p=10$---the latter is the example shown in Table \ref{tab:timing_comparison_var_ann}.

The neural network model can be solved relatively fast, taking 2 minutes in the largest problem.

\section{Sensitivity analysis to hyper-parameters}\label{sec:sensitivity}

We discuss the sensitivity of our results with respect to the choice of hyper-parameters for the polynomial LDR and the neural network. 

\subsection{Sensitivity of polynomial LDR} \label{subsec:sensitivity_ldr}
The polynomial LDR method in Section~\ref{sec:ldr_part1} has two hyper-parameters, the target dimensionality $p$ and the polynomial degree $\delta$. Additionally, the Riemannian BFGS algorithm used to solve the optimization problem adds several other parameters, the main one being the starting point for the parameter $A$, called here $A_0$.

The polynomial degree parameter is common to all polynomial approximations, and has the expected impact on the results. In this section, we focus on the parameter $p$ which is unique to the linear dimensionality reduction and the parameter $A_0$ which in our empirical examples proved to have a large impact on results.

We show that the choice of $p$ is purely a trade-off between approximation error and number of samples required, and that the choice of starting point makes a very large difference in the final results. A random starting point performs relatively badly, compared to a starting point that takes into account the fact that in financial models, cash flows closer in time are usually more important than those farther in time.

\subsubsection{Starting point $A_0$}
The Riemannian BFGS algorithm used to solve the polynomial LDR optimization problem requires a starting point for $A$.

A first, simple way of generating a starting point---similar to what is done for the L-BFGS algorithm used to solve the neural network optimization problem---is to generate it randomly. To do this we draw $dTp$ random samples from $N(0,1)$ and arrange them into an $dT \times p$ matrix $B$. Then $A_0=B(B^\top B)^{-1/2}$ is a random matrix that follows the uniform distribution on the Stiefel manifold $V_p(\R^{dT})$.

A second way is to use a rectangular diagonal matrix and fill the last column to ensure that every one of the $dT$ input dimensions has a weight in at least one of the $p$ output dimension, that is $A_0 = B \mid B_{ij} = 1 \text{ if } (i=j \land i\neq p) \land B_{ip}=\frac{1}{\sqrt{dT-p+1}} \text{ if } i \geqslant p$. Conceptually, this starting point can be thought as a point where those input dimensions farthest in the future have been grouped into one output dimension. The following is an example for $dT=4$ and $p=3$:

\[  \begin{pmatrix}1&0&0\\0&1&0\\0&0&\sqrt{0.5}\\0&0&\sqrt{0.5}\end{pmatrix} . \]

A third way uses the same rationale of grouping input dimensions that are far in the future into one output dimension, but does so respecting the fact that $X \in \R^{T\times d}$ and only groups variables across time ($T$) but not across dimensions ($d$). The following is an example for $T=5$, $d=3$ and $p=3$ which corresponds to what was used in the European call example in Section~\ref{sec:call_example}:

\[ \begin{pmatrix}\sqrt{T}&0&0\\0&\sqrt{T}&0\\0&0&\sqrt{T}\\
\vdots & \vdots & \vdots \\
\sqrt{T}&0&0\\0&\sqrt{T}&0\\0&0&\sqrt{T}\end{pmatrix}  .\]

Since this method, which we call \say{folding}, provides the best results we also use it in Sections~\ref{sec:call_example} and \ref{sec:insurance}. In the latter we work with $T=5$, $d=5$ and $p=10$, and leads to a starting point (in block notation):

\begin{center}
\begin{tikzpicture}[decoration=brace]
    \matrix (m) [matrix of math nodes,left delimiter=(,right delimiter={)}] {
    \mathbf{1}_d & \mathbf{0}_{d\times(p-d)} \\
    \mathbf{0}_{d\times d} & \sqrt{T-1}\mathbf{1}_{p-d}\\\
    \vdots & \vdots \\
    };
    \draw[decorate, transform canvas={xshift=1.3em},thick] (m-2-2.east|-m-2-2.north east) -- node[right=2pt] {$T-1$ times} (m-2-2.east|-m-2-2.south east);
\end{tikzpicture}.
\end{center}

In Tables \ref{tab:varann_es_init_methods} and \ref{tab:varann_l1_init_methods}, which correspond to Tables~\ref{tab:varann_mae_es_all_methods} and \ref{tab:varann_l1_error_all_methods}, we can see the comparison across different starting points. The folding starting point performs best of all starting points. This is not surprising since it is expected that for this type of models---European call option and insurance liability---the combination of path dependency and discounting makes variables closer in time relatively more important than those farther in time. A disappointing characteristic revealed in the data is that when increasing the number of training samples, we do not always get a strictly decreasing error. In fact, the error seems to stabilize relatively early---around 5000 samples---and then only be subject to small fluctuations. The comparison across starting points confirms that this lack of improvement is not due to a lack of a better solution, but rather most likely to the presence of local minima.
 
\begin{table}[H] 
    \centering
    \small
    \caption{Insurance liability, comparison of expected shortfall MApE (in percentage points) for different starting points in polynomial LDR basis}
\begin{tabular}{@{}rrrrrrrr@{}}
\toprule
\textbf{Samples} & \textbf{Folding} & \textbf{Diagonal} & \textbf{Random} \\
\midrule
\multicolumn{1}{l}{\textbf{\textbf{T}=5}}  &   &   &   \\
           1,000 &              7.9 &              14.9 &            16.6 \\
           5,000 &              5.3 &              17.5 &            31.5 \\
          10,000 &              5.1 &              23.6 &            31.8 \\
          50,000 &              3.7 &              31.0 &            32.4 \\
\multicolumn{1}{l}{\textbf{\textbf{T}=40}}  &   &   &   \\
           1,000 &             22.9 &              38.0 &            63.9 \\
           5,000 &              5.7 &              54.7 &            69.4 \\
          10,000 &              5.9 &              53.0 &            68.2 \\
          50,000 &              7.2 &              49.2 &            67.0 \\
\bottomrule
\end{tabular}

    \label{tab:varann_es_init_methods}
\end{table}

\begin{table}[H] 
    \centering
    \small
    \caption{Insurance liability, comparison of relative mean $L_1$ error (in percentage points) for different starting points in polynomial LDR basis}
\begin{tabular}{@{}rrrrrrrr@{}}
\toprule
\textbf{Samples} & \textbf{Folding} & \textbf{Diagonal} & \textbf{Random} \\
\midrule
\multicolumn{1}{l}{\textbf{\textbf{T}=5}}  &   &   &   \\
           1,000 &              0.2 &               1.0 &             1.2 \\
           5,000 &              0.2 &               0.8 &             1.0 \\
          10,000 &              0.2 &               0.9 &             1.0 \\
          50,000 &              0.2 &               0.9 &             1.0 \\
\multicolumn{1}{l}{\textbf{\textbf{T}=40}}  &   &   &   \\
           1,000 &              0.8 &               1.7 &             1.7 \\
           5,000 &              0.3 &               1.3 &             1.5 \\
          10,000 &              0.4 &               1.3 &             1.5 \\
          50,000 &              0.4 &               1.2 &             1.5 \\
\bottomrule
\end{tabular}

    \label{tab:varann_l1_init_methods}
\end{table}

\subsubsection{Target dimensionality parameter $p$}

To show the effects of parameter $p$ on the insurance example, we choose one of the starting point methods (diagonal) and one maturity ($T=5$). The results in Tables \ref{tab:varann_es_ldr_param} and \ref{tab:varann_l1_ldr_param}, which correspond to Tables~\ref{tab:varann_mae_es_all_methods} and \ref{tab:varann_l1_error_all_methods}, confirm the expected effect of changing this parameter: larger values of $p$ produce better results (since the feature map $\phi_\theta$ is a richer function) but also require more training samples to do so. We can see that when moving from $p=10$---used in the main results for the insurance example---to $p=15$ and therefore from $m=286$ to $m=816$ the error for 1,000 training samples increases by a factor of 10 in the expected shortfall and by a factor of 3 in the $L_1$ metric. In those cases with more training samples---5,000 and above---the error goes down as expected.

\begin{table}[H]
    \centering
    \small
    \caption{Insurance liability, comparison of expected shortfall MApE (in percentage points) for different values of the target dimensionality parameter in polynomial LDR basis}
\begin{tabular}{@{}rrrrrrrr@{}}
\toprule
\textbf{Samples} & \textbf{Diagonal p=5} & \textbf{Diagonal p=10} & \textbf{Diagonal p=15} & \textbf{Diagonal p=20} \\
\midrule
\multicolumn{1}{l}{\textbf{\textbf{T}=5}}  &   &   &   &   \\
           1,000 &                  22.3 &                   14.9 &                  228.0 &                  767.9 \\
           5,000 &                  29.3 &                   17.5 &                    8.2 &                    9.7 \\
          10,000 &                  33.3 &                   23.6 &                   13.2 &                    9.0 \\
          50,000 &                  37.5 &                   31.0 &                   21.0 &                   10.7 \\
\bottomrule
\end{tabular}

    \label{tab:varann_es_ldr_param}
\end{table}

\begin{table}[H]
    \centering
    \small
    \caption{Insurance liability, comparison of relative mean $L_1$ error (in percentage points) for different values of the target dimensionality parameter in polynomial LDR basis}
\begin{tabular}{@{}rrrrrrrr@{}}
\toprule
\textbf{Samples} & \textbf{Diagonal p=5} & \textbf{Diagonal p=10} & \textbf{Diagonal p=15} & \textbf{Diagonal p=20} \\
\midrule
\multicolumn{1}{l}{\textbf{\textbf{T}=5}}  &   &   &   &   \\
           1,000 &                   1.0 &                    1.0 &                    3.1 &                   37.3 \\
           5,000 &                   1.0 &                    0.8 &                    0.6 &                    0.5 \\
          10,000 &                   1.0 &                    0.9 &                    0.6 &                    0.6 \\
          50,000 &                   1.0 &                    0.9 &                    0.7 &                    0.5 \\
\bottomrule
\end{tabular}

    \label{tab:varann_l1_ldr_param}
\end{table}

\subsection{Sensitivity of neural network}  \label{subsec:sensitivity_nn}

The neural network method has one main hyper-parameter, the width of the hidden layer. Other typical neural network hyper-parameters as number of layers or activation function do not apply in this case, since the closed-form of the time-t expectation has been defined only for single-layer, ReLu networks. Unlike the polynomial LDR basis, we do not explore the impact of the starting point, since a random starting point already performs very well.

In the main results in Section~\ref{sec:insurance}, we use the same layer width in all cases, $p=100$. This value is the results of a sensitivity test done for different values (10, 50, 100, 200) after which we chose the best results overall cases. This sensitivity test is similar to cross validation but is performed on entirely out-of-sample data, rather than partitioning the existing training data. This has the advantage of keeping the full size of the sample for each regression instead of having to reduce it to allow a percentage to be used as validation set. While cross validation is more frequently used when the total sample budget is fixed, sensitivity analysis is more adequate when one has the ability to generate as many out-of-sample sets as needed. In a different practical setting as the one in this paper, it might be more appropriate to use cross validation for the selection of hyper-parameters.

Using a single choice of layer width in all cases has the advantage of showing good overall results (for different maturities and training sample size) but the disadvantage of being neither optimized for each single case (meaning that the results could have been better when looking at each cell of the table) nor comparable to the polynomial method in terms of functional complexity, that is, the number of parameters that describe the feature map $\phi_\theta$.

The selection could have been done in different ways, and in this section we show some alternatives and their effects on the results shown in Section~\ref{sec:insurance}. The results are summarized in Tables~\ref{tab:varann_es_layer_width} and \ref{tab:varann_l1_layer_width}, which correspond to Tables~\ref{tab:varann_mae_es_all_methods} and \ref{tab:varann_l1_error_all_methods}. We show that some of the alternatives perform even better than our choice for the main results, implying the potential for improvement in the neural network basis, which is already the best performing basis in our comparisons.

\begin{table}[H]
    \centering
    \small
    \caption{Insurance liability, comparison of expected shortfall MApE (in percentage points) for different layer widths in neural network basis}
\begin{tabular}{@{}rrrrrrrr@{}}
\toprule
\textbf{Samples} & Fixed $p=100$ & Minimum width & Equal param dims & Equal $m$ \\
\midrule
\multicolumn{1}{l}{\textbf{\textbf{T}=5}}  &   &   &   &   \\
           1,000 &           2.9 &           2.2 &              3.1 &       3.9 \\
           5,000 &           0.5 &           3.2 &              4.1 &       0.4 \\
          10,000 &           0.6 &           3.3 &              4.2 &       0.2 \\
          50,000 &           0.8 &           3.3 &              4.2 &       0.3 \\
\multicolumn{1}{l}{\textbf{\textbf{T}=40}}  &   &   &   &   \\
           1,000 &          14.4 &          14.5 &             17.5 &      14.1 \\
           5,000 &          11.1 &          12.5 &              2.7 &      12.8 \\
          10,000 &          10.5 &          12.6 &              2.0 &      13.2 \\
          50,000 &           0.5 &           1.4 &              2.2 &       2.4 \\
\bottomrule
\end{tabular}

    \label{tab:varann_es_layer_width}
\end{table}

\begin{table}[H]
    \centering
    \small
    \caption{Insurance liability, comparison of relative mean $L_1$ error (in percentage points) for different layer widths in neural network basis}
\begin{tabular}{@{}rrrrrrrr@{}}
\toprule
\textbf{Samples} & Fixed $p=100$ & Minimum width & Equal param dims & Equal $m$ \\
\midrule
\multicolumn{1}{l}{\textbf{\textbf{T}=5}}  &   &   &   &   \\
           1,000 &           0.1 &           0.1 &              0.1 &       0.1 \\
           5,000 &           0.1 &           0.1 &              0.1 &      \textless{}0.1 \\
          10,000 &          \textless{}0.1 &           0.1 &              0.1 &      \textless{}0.1 \\
          50,000 &          \textless{}0.1 &           0.1 &              0.1 &      \textless{}0.1 \\
\multicolumn{1}{l}{\textbf{\textbf{T}=40}}  &   &   &   &   \\
           1,000 &           6.1 &           6.4 &              2.8 &       6.5 \\
           5,000 &           0.6 &           0.8 &              0.5 &       0.8 \\
          10,000 &           0.5 &           0.5 &              0.2 &       0.5 \\
          50,000 &           0.1 &           0.1 &              0.1 &       0.1 \\
\bottomrule
\end{tabular}

    \label{tab:varann_l1_layer_width}
\end{table}

The first alternative is to use the theoretical minimum width for the network, as described in \cite{hanin2017approximating}. In our case, it means using $p=25$ for $T=5$ and $p=200$ for $T=40$. This method does not show a good performance. Interestingly, it performs worse even for $T=40$ where $p=100$ is below the minimum. Still, this is not a violation of the theoretical minimum since it assumes a neural network of arbitrary depth, so it is always possible that using more hidden layers would result in smaller errors than the fixed $p$ method.

The second alternative is to backsolve the width of the network that creates a parameter space of similar dimensionality as that of the polynomial LDR method. For $d=5$, $T=5$ and $p=10$ the polynomial LDR has 481 parameters. For $T=40$ it has 2,231 parameters. This can be matched by using a neural network with $p=18$ and $p=11$ nodes respectively. This alternative provide very good results for the high dimensionality case ($T=40$) but not as good for the low dimensionality case ($T=5$).

The third and final alternative is to use a neural network that matches the number of basis functions $m$. This means, for both $T=5$ and $T=40$, that $p=286$. The results for this alternative are similar to the other alternatives.

\bibliographystyle{alpha}
\bibliography{drpbib}

\end{document}